\newtheorem{myprop}{Proposition}
\newtheorem{mylemma}{Lemma}
\newtheorem*{mythm}{Theorem}
\begin{document}
\title{An effective network reduction approach to find the dynamical repertoire of discrete dynamic networks}

\author{Jorge G. T. Za\~nudo}\email[]{jgtz@phys.psu.edu}
\affiliation{Department of Physics, The Pennsylvania State University,\\
  University Park, Pennsylvania, 16802-6300, USA.}
\author{R\'eka Albert}\email[]{ralbert@phys.psu.edu}
\affiliation{Department of Physics, The Pennsylvania State University,\\
  University Park, Pennsylvania, United States of America.}
\affiliation{Department of Biology, The Pennsylvania State University,\\
  University Park, Pennsylvania, 16802-5301, USA.}

\begin{abstract}
Discrete dynamic models are a powerful tool for the understanding and modeling of large biological networks. Although a lot of progress has been made in developing analysis tools for these models, there is still a need to find approaches that can directly relate the network structure to its dynamics. Of special interest is identifying the stable patterns of activity, i.e., the attractors of the system. This is a problem for large networks, because the state space of the system increases exponentially with network size. In this work we present a novel network reduction approach that is based on finding network motifs that stabilize in a fixed state. Notably, we use a topological criterion to identify these motifs. Specifically, we find certain types of strongly connected components in a suitably expanded representation of the network. To test our method we apply it to a dynamic network model for a type of cytotoxic T cell cancer and to an ensemble of random Boolean networks of size up to 200. Our results show that our method goes beyond reducing the network and in most cases can actually predict the dynamical repertoire of the nodes (fixed states or oscillations) in the attractors of the system.

\end{abstract}
\maketitle

\section*{Lead paragraph} \label{sec:lead}

There is a great interest in understanding how the complex cellular behaviors in living organisms emerge from the underlying network of molecular interactions. Discrete dynamic models, a modeling paradigm in which the dynamical variables can only take discrete states, have been increasingly used to model systems with a large number of components. The feature that makes discrete dynamic models an attractive choice is their ability to reproduce the qualitative dynamics of the system using only the activating or inhibiting nature of the interactions; the knowledge of the rates of the biochemical processes involved is not required. Despite their simplicity, the main impediment in using discrete dynamic models for modeling large systems is combinatorial complexity. In this work, we offer a solution to this problem by introducing a novel network reduction approach. Our reduction approach uses a topological criterion in an augmented representation of the network to identify network components that take a fixed state, which can then be used to shrink the effective network size. A noteworthy virtue of our method is that it can be applied to large network sizes (up to size 200 and beyond). We have found that our method goes beyond reducing the size of the network and can predict the dynamical repertoire of the nodes (fixed states or oscillations).

\section{Introduction} \label{sec:1}

The interactions among cellular elements such as proteins, mRNA, and small molecules are orchestrated in such a way that they support all the complicated behaviors cells are capable of (such as homeostasis, growth, movement, cell differentiation and cell division) \cite{BarabasiNetworkBiology}. In order to get a full understanding of the relation between cellular behaviors and their underlying network of interactions, the construction of informative dynamic models based on the current biological knowledge is very important. Several dynamical modeling techniques exist, which provide different levels of detail in the dynamics, while in turn requiring varying amounts of biological information \cite{LessIsMore,QuantitativeModeling}. At one end of the spectrum, for example, highly quantitative information can be obtained from ordinary differential equation models \cite{TysonDynamics1,TysonDynamics2,QuantitativeReview,SignalingDynamics} by providing different reaction rates (e.g. transcription/translation rates, association/dissociation constants, degradation coefficients) and the biophysical/biochemical properties of the components (e.g. enzyme cooperativity). At the other end, the qualitative dynamics of the system can be reproduced by a discrete dynamic model \cite{RekaDrosophila,ArabidopsisFlower,YeastSocolar,TLGLPNAS,TcellSignaling,RekaReview}, which requires only the combinatorial activating or inhibiting nature of the interactions, and not the kinetic details \cite{RekaNoKineticDetails}.

Given the surprising but demonstrated fact that the essential dynamical properties of a variety of systems can be reproduced without knowing the values of the specific kinetic parameters of the processes involved \cite{RekaDrosophila,ArabidopsisFlower,YeastSocolar,TLGLPNAS,TcellSignaling,RekaReview}, one may wonder if there is a model-independent way to infer the dynamical properties of cellular networks just by using the network topology (graph structure), that is, the identity of the components and knowledge about their interactions. Historically, this relation between structure and dynamics was recognized early on in the pioneering work of Jacob and Monod \cite{JacobMonod}, Thomas \cite{ThomasOriginal}, Kauffman \cite{KauffmanOriginal}, and Glass \cite{GlassKauffman}, and was part of the original motivation for the study of discrete dynamic models. The common idea is that the presence of feedback loops is necessary for the emergence of complex dynamical properties such as multistability and oscillations. More specifically, by assigning a sign to the interactions (+ if activating and - if inhibitory) and to the feedback loops in the network (the sign of a loop is given by the product of the signs of its edges), the following two simple rules were proposed by R. Thomas \cite{ThomasConjectures} to relate the network structure to its dynamics:
\begin{enumerate}
  \item A necessary condition for multistability (multiple stable steady states) is the existence of a positive feedback loop.
  \item A necessary condition for sustained oscillations (limit cycles) is the existence of a negative feedback loop.
\end{enumerate}
Since these early works, extensive research has been done in this direction and the validity of these rules has been demonstrated both in the differential \cite{Plahte,Snoussi,Gouze,Soule} and discrete frameworks \cite{Aracena,RemyRuet,Remy}. Recent works have even extended these rules to include not only necessary but also sufficient conditions for multistability and oscillations \cite{MinimalCircuitsRemy,Siebert}.

Despite all this progress, there is still a need for developing new analysis tools that relate the network structure to its dynamics, especially ones that are applicable to large scale networks. This is a problem for many of the methods developed so far, since many of them are very computationally demanding and can only be exactly applied to networks of small to moderate size. The size of the networks is also a problem even in cases where mathematical theorems are available, because as the network increases in size, it is very likely that the conditions needed in the theorems become harder and harder to be fulfilled. These limitations call for methods that are as generally applicable as possible.

The novel analysis method we present in this work has the objective of inferring the dynamical repertoire of a network based purely on network topology and the combinatorial nature of the interactions. Framed in the discrete dynamic framework, our method is based on the idea that some groups of nodes in the network can only stabilize in a single or a small number of fixed states. By expanding the network to explicitly include the nature of the interactions (positive or negative) and the potentially synergistic regulation of every element in the network, we can identify these stable groups of nodes and use them to simplify the network. The result is a complete reduction (which directly gives the fixed points of the system) or a very simplified network in which most nodes are expected to oscillate. In section \ref{sec:2} we explain our method in more detail, including the network expansion and network reduction techniques involved in it.

\section{Predicting the stable dynamical repertoire of a Boolean model of a biological network} \label{sec:2}

\subsection{Biological networks and discrete dynamic models} \label{sec:2.1}

A network of cellular components can be represented by a directed graph $G=(V, E)$, where $V=\left(v_1, v_2, \ldots, v_N \right)$ are the nodes describing the elements of the system, and $E$ are the edges denoting the directed interactions among the components. To each edge one also commonly associates a sign, which denotes the regulatory nature of the interaction (+ if activating, and - if inhibitory). Although the sign of the interaction enriches the biological information included in the network, experience shows that knowing the nature of the interactions is not enough and the combined effect of the interactions on each element also needs to be considered. For this purpose, every node $v_i$ is assigned a function $f_i$ which depends on the $k_i$ regulators of $v_i$ (and sometimes on itself) and that incorporates the combinatorial nature of the interactions. The set of functions $F=\left(f_1, f_2 , \ldots, f_N \right)$ contains all the dynamical information of the system, and thus, depends on the type of dynamic model used.

For our study, we choose the simplest kind of discrete dynamic model, namely, the Boolean framework, in which the functions in $F$ are taken as logical (Boolean) functions and each node $v_i$ can take one of two possible states: ON (or 1) and OFF (or 0). The biological interpretation of each state varies depending on the context, although in most cases it refers to above (ON) or below (OFF) a certain threshold level. The state of the system at any time can then be denoted by a vector whose $i^{th}$ component is the state of node $v_i$ at that time. As a consequence of the discrete number of states, the state space of the system is finite ($2^N$ states), and a temporal sequence of states can be represented as a trajectory in state space. Every temporal trajectory will eventually reach a set of network states in which it settles down, known as an attractor. An attractor can either be composed of a single network state in which the system stays fixed, known as a fixed point or a steady state, or a group of network states between which it alternates, usually referred to as a complex attractor or an oscillation. In a steady state the state of all nodes remains fixed, while in a complex attractor all or a subset of the nodes keep changing their states (i.e. they oscillate),  and the state of the rest of the nodes (if any) is fixed.

The possible trajectories and complex attractors of a system depend not only on the functions in $F$ but also on the representation of time as a continuous or discrete variable (the fixed points are time-implementation-invariant). The most common choice for Boolean dynamics is taking time as a discrete variable, in which case the nodes are updated at discrete time steps according to the functions in $F$. In the simplest case, the synchronous scheme, every node is updated simultaneously in discrete time steps and its state depends only on the state of the system in the previous step. The synchronous updating scheme, although suitable in some situations, is not apt for our purposes, as it inherently assumes that all processes occur at a similar timescale, which is clearly not true for intracellular networks, in which a large variety of cellular components and processes are involved. Furthermore, the timescales of a large number of these processes are not well known, and even when they are, they may be subject to fluctuations due to cell-to-cell variability and environmental perturbations. Previous work has developed various asynchronous updating methods, wherein each node is updated according to its own timescale, and which can be either deterministic (the timescales are fixed during a simulation) \cite{AsynchronousMethods,Chaves1} or stochastic (the timescales are randomly varied during a simulation) \cite{GlassAsynchronous,ThomasReview,AsynchronousMethods,Chaves1,Chaves2,Socolar}. In a comparative study \cite{AssiehJTB} several discrete time asynchronous updating schemes were tested in the same biological network, with its results suggesting that the general asynchronous method, in which at every discrete time step a randomly selected node is updated, is the most appropriate scheme for our requirements.

The general asynchronous method is advantageous not only because it takes into consideration the multiple timescales involved in intracellular processes, our incomplete knowledge of all these timescales, and the inherent stochasticity of biological processes, but also because it allows a natural biological interpretation of the attractors of the system. The reason for this is that, by definition, the general asynchronous method samples all possible timescales of the system, therefore, the attractors must correspond to the \emph{patterns of activity of the system which are invariant with respect to arbitrary fluctuations in the rates of the processes involved}, which we denote the \emph{stable dynamic repertoire of the network}. The use of the Boolean framework and the general asynchronous updating scheme then maps the problem of finding the rate-invariant dynamic behavior of a cellular network into finding the attractors of a Boolean network.

\subsection{Finding the attractors of a Boolean model} \label{sec:2.2}

As it was pointed out in section \ref{sec:2.1}, the state space of a Boolean network with $N$ nodes contains $2^N$ states. This exponential dependence on the number of nodes of the state space's size makes the problem of finding the attractors of a Boolean network computationally intractable, which means a full search of the state space can be performed, in practice, only for small networks ($N \lesssim 20$). To overcome this challenge several types of methods have been proposed to simplify the search space. The most prominent of these approaches, the so-called network reduction methods \cite{DecimationProcess,AssiehJTB,ReductionNadil,ReductionVeliz}, shrink the effective network size by removing frozen nodes (nodes that reach the same steady state regardless of initial conditions) \cite{DecimationProcess,SocolarandKauffman,DrosselScaling} and dynamically irrelevant nodes (such as simple mediator nodes or nodes with no outputs), and by simplifying the Boolean functions (for example, due to the presence of a node with a fixed state), thus reducing the effective state space.

Although the reduction methods developed so far have been successfully applied to several biological networks \cite{AssiehJTB,AssiehPCB,ReductionNadil,ReductionVeliz}, they have some limitations. For example, some of them may not be effective enough, in the sense that even after reduction the state space's size is still unmanageable, in which case one must resort to sampling the state space \cite{AssiehJTB,AssiehPCB}. Other methods can affect the state space in such a way that the attractor space is changed and only some types of attractors (e.g. steady states) are preserved \cite{ReductionNadil,ReductionVeliz}.

\subsection{The role of stable motifs and oscillating components in the attractor landscape} \label{sec:2.3}

The method that we propose is based on the idea that certain components of the network can only stabilize in one or a small number of attractors. This idea is itself not new and is closely related to R. Thomas' rules linking feedback loops with the appearance of complex dynamical behavior in biological networks \cite{ThomasConjectures}. For example, an approach that connects the dynamics of certain network motifs to construct the attractors of the full Boolean network was recently proposed by Siebert \cite{Siebert}. The novelty of our method is the efficiency of identifying every motif that stabilizes in an asynchronous attractor despite being coupled to the rest of the network. The main insight of our approach is that a representation of a Boolean network known as the expanded network can be used to easily identify these network components and their states. By combining the knowledge of the behavior of this group of nodes with network reduction methods, we can find other network components that stabilize as a consequence of the former. By repeatedly applying this procedure, the states of all nodes can be found, which will correspond to their states in the attractors of the system.

We will first focus on finding the components that stabilize in a fixed state. More specifically, we will look for network components with certain topological characteristics that cause themselves and other nodes to take a fixed state. We refer to these network components as \emph{stable motifs} or \emph{stable components}. It is worth noting that the nodes in the stable motifs may or may not be part of the so-called frozen nodes of Boolean networks \cite{DecimationProcess,SocolarandKauffman,DrosselScaling}, since the nodes of these stable motifs can have more than one steady state. Although it may seem at first that this restriction to fixed-state nodes reduces the general applicability of the method, it turns out that this is not necessarily the case. Indeed, there are two possibilities after using our method to find all the nodes that take a fixed state in an attractor. The list of fixed-state nodes could include all nodes in the network, in which case we have identified a fixed point attractor. Or, if the list covers a subset of the nodes, these nodes must represent the non-oscillating nodes of a complex attractor. In this last case, the nodes the method does not identify are expected to keep changing their values (i.e. oscillate) in the attractor. We refer to the network components that stabilize in an oscillating state as \emph{oscillating motifs} or \emph{oscillating components}. We discuss in more detail the role of these oscillating components and of oscillations in section \ref{sec:2.7}.

\subsection{Expanded Network} \label{sec:2.4}

In order to identify the stable motifs of a Boolean network, it is convenient to use a representation that incorporates explicitly the update functions $f_i$. Previous work \cite{ESMRuiSheng} has shown that a useful representation for this purpose is the so-called expanded network representation.

\begin{figure*}[t]
\includegraphics{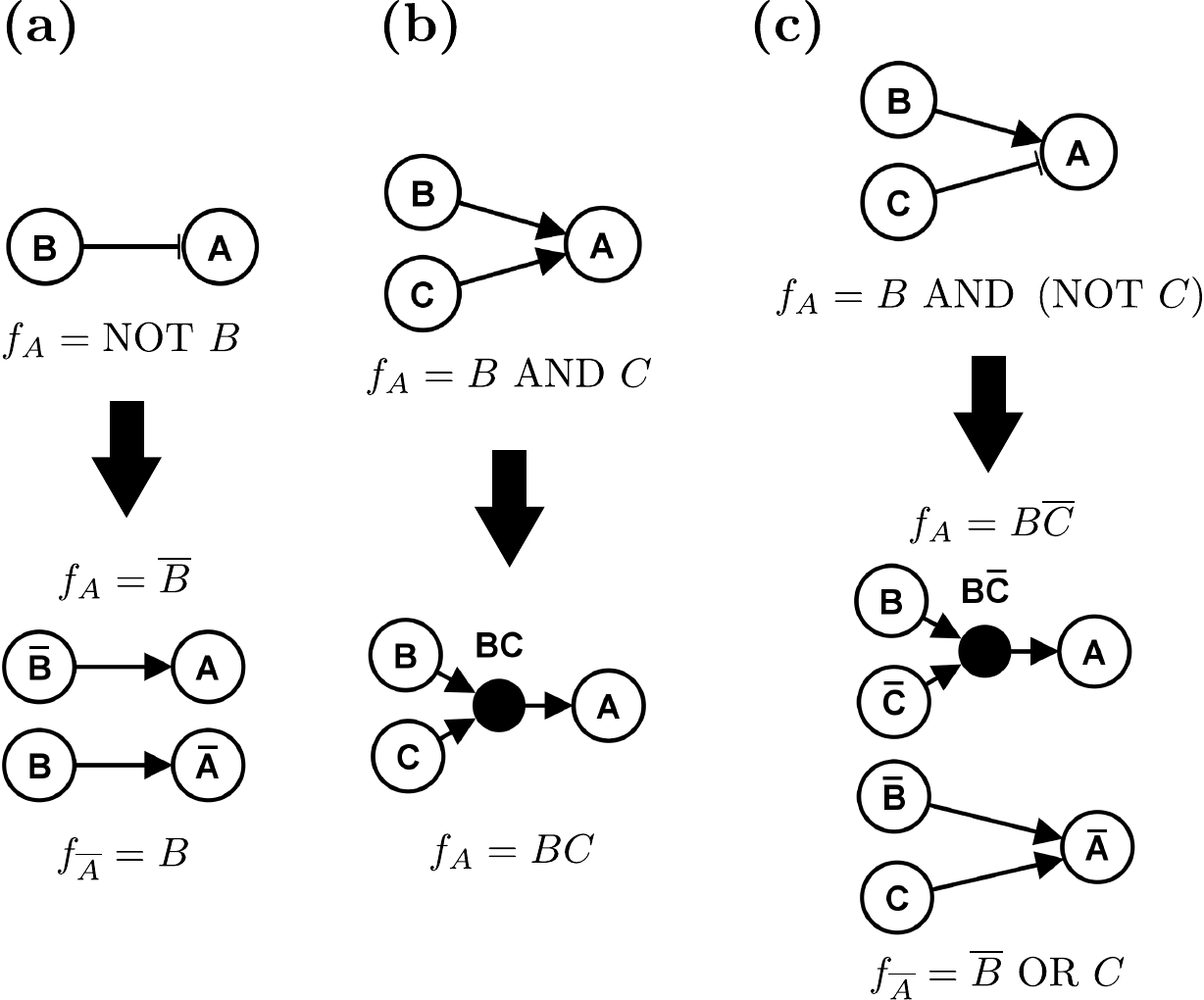}
\caption{Operations for the creation of the expanded network. (a) Node $A$ has the update function $f_A=\hbox{NOT } B$. The addition of complementary nodes introduces a new node $\overline{A}$ with update function $f_{\overline{A}}=\hbox{ NOT } f_A = B$. It also introduces a complementary node for node $B$, which makes the update function of $A$ take the form $f_A=\hbox{NOT } B=\overline{B}$. (b) Node $A$ has the update function $f_A=B\hbox{ AND }C$. The addition of composite nodes introduces a new node $BC$ that represents the cooperative effect of B and C on A. (c) Node $A$ has the update function $f_A=B\hbox{ AND }\left(\hbox{NOT }C\right)$. The two expansion operations introduce complementary nodes for $A$, $B$, and $C$, and a composite node for the AND relation between B and $\overline C$.}
\label{fig:NetworkExpansion}
\end{figure*}

The creation of the expanded network consists of two basic operations, which we illustrate in Figure \ref{fig:NetworkExpansion}. First, in networks that include account negative regulations, we introduce a complementary node $\overline{v}_i$ for every node $v_i$ in the network and assign to each $\overline{v}_i$ an update function which is the Boolean negation of $v_i$'s update function $f_i$. The addition of complementary nodes has a two-fold effect; not only does it allow us to evaluate the inhibitory effect of a node on the rest of the network, but by assigning the negation of the original update function to every complementary node, it also explicitly considers how the other nodes can promote the inactivation of a given node. Note that because of the addition of complementary nodes all the edges are of the same (positive) nature, and thus, no sign needs to be specified.

As an example, let us consider node $A$ with update function $f_A=\hbox{NOT } B$, as illustrated in Figure \ref{fig:NetworkExpansion}(a), where for simplicity we denote the state of the node with the node name. The addition of complementary nodes means that we add a new node $\overline{A}$ with update function $f_{\overline{A}}=\hbox{ NOT } f_A = B$. An additional complementary node is added for node $B$, which makes the update function of $A$ take the form $f_A=\hbox{NOT } B=\overline{B}$. The expanded network contains two positive edges, from $\overline{B}$ to $A$, and from $B$ to $\overline{A}$, instead of the negative edge from $B$ to $A$.

Second, to incorporate the combinatorial nature of the update functions, we introduce a composite node for each set of synergistic interactions (that is, AND relationships) in the Boolean functions $f_i$. For example, consider the case shown in Figure \ref{fig:NetworkExpansion}(b), in which node $A$ has the logical function $f_A=B\hbox{ AND }C$. Since the function contains an AND relationship between node $B$ and node $C$, a composite node $BC$ is added when expanding the network. Node $B$ and $C$ are connected by directed edges to the composite node $BC$, and an edge from $BC$ to $A$ is also added. A more complicated example in which both operations are applied is shown in Figure \ref{fig:NetworkExpansion}(c).

In general the introduction of composite nodes may not be as obvious as shown in these examples if we have nontrivial combination of AND, NOT and OR rules in the functions $f_i$. Because of this, it is convenient to represent each update function $f_i$ with $K$ input nodes $\left\{ v_{i1}, v_{i2}, \ldots , v_{iK}  \right\}$ in the following disjunctive normal form:
\begin{align*}
  f_i =& \left(s_{1} \hbox{ AND } s_{2} \hbox{ AND } \cdots \hbox{ AND } s_{k} \right) \\
  & \hbox{ OR }\left(s_{k+1} \hbox{ AND } s_{k+2} \hbox{ AND } \cdots s_{l} \right) \\
  & \hbox{ OR } \cdots  \hbox{ OR } \left(s_{m} \hbox{ AND } s_{m+1} \hbox{ AND } \cdots \hbox{ AND } s_{n} \right),
\end{align*}
where the $s_{j}$'s are either the states of one of the $K$ input nodes of $f_i$, or one of these states' negations. In the same way, the negation of the update function, $\overline{f}_i$, is also represented in a disjunctive normal form. Once the functions are represented in the disjunctive normal form, the introduction of composite nodes is simple: a composite node is added for every set of nodes involved in a conjunctive clause (AND-dependent relationship). In the following we will refer to nodes that are not complementary nor composite as normal nodes.

\subsection{Identifying Stable Motifs from the Expanded Network} \label{sec:2.5}
\addcontentsline{toc}{subsection}{Identifying Stable Motifs from the Expanded Network}

We define a stable motif in the expanded network as any of the smallest strongly connected components (SCCs) in the expanded network representation which satisfy these two properties: (1) the SCC does not contain both a node and its complementary node, and (2) if the SCC contains a composite node, all of its input nodes must also be part of the SCC. The first condition makes sure that there is no contradiction between the SCCs found and a state in the original Boolean network, wherein every node can either take the value 0 (which would correspond to having the complementary node in the SCC) or 1 (which would correspond to having the normal node in the SCC). The second condition is a consequence of the synergistic nature of composite nodes, which means that a composite node and all of its inputs form an irreducible unit. By smallest SCC we refer to any SCC that does not contain another SCC with the two specified properties, but that, otherwise, is arbitrary in size. For example, a node of the expanded network with a self-loop would be one of these smallest SCCs. We restrict ourselves to the smallest SCCs so that when there are multiple such SCCs in the system all possible combinations of these SCCs are considered as steady states of the network. Note that this does not result in the loss of information on larger possible SCCs; the remaining parts of these SCCs will be found after the smallest SCCs are reduced.

The composition of the stable motif directly determines the state of a subset of nodes in the Boolean network: every normal node of the stable motif will adopt the state 1, and for every complementary nodes included in the stable motif the corresponding node of the Boolean network will adopt the state 0.

\begin{figure*}[t]
\includegraphics{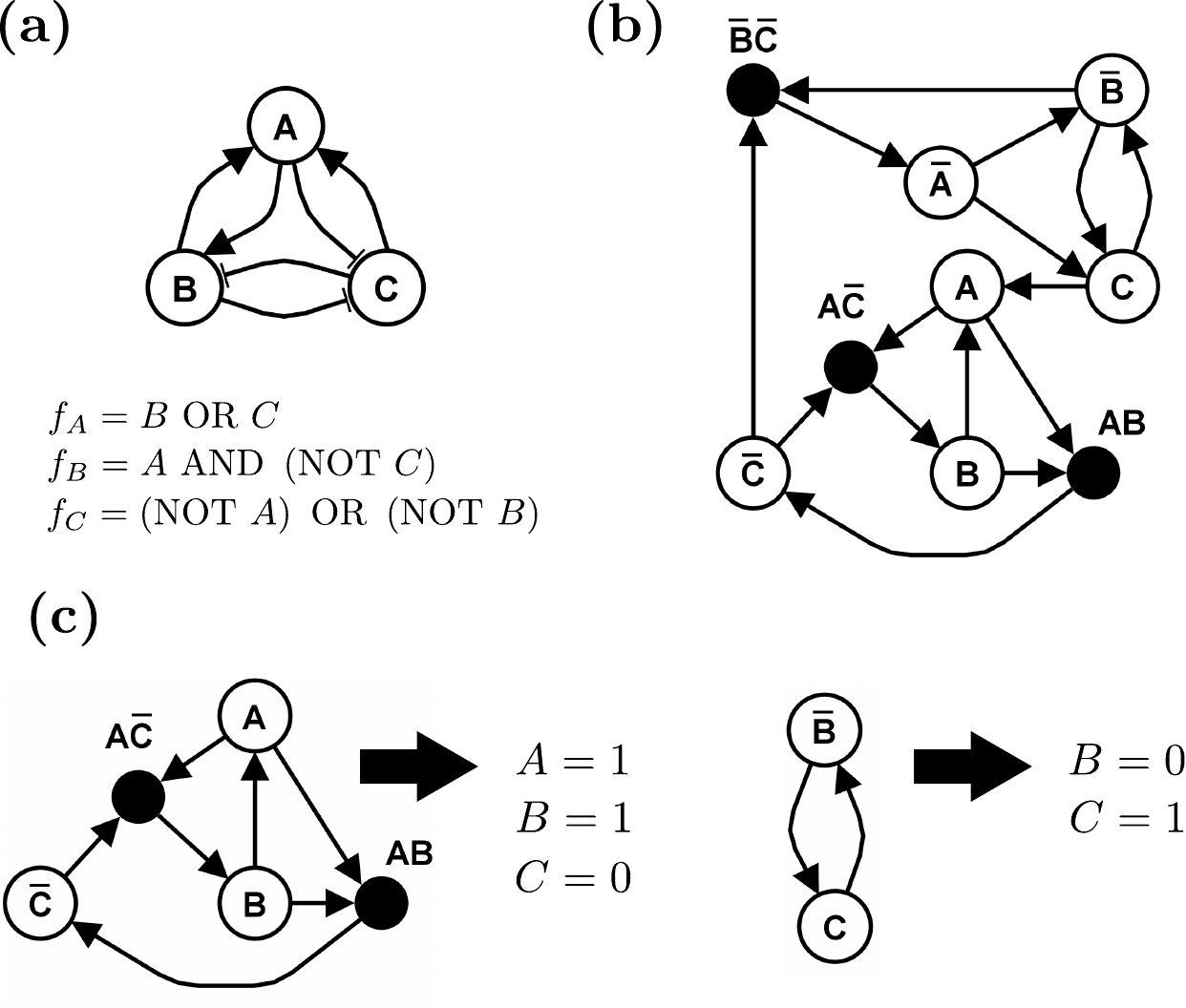}
\caption{Identification of stable motifs from the expanded network. (a) An example of a Boolean network. (b) The expanded network representation of the Boolean network in (a). (c) The two stable motifs in the expanded network, that is, the two smallest SCCs in the network that satisfy the requirements of not containing both a node and its complementary node, and containing all the inputs of every included composite node. Each stable motif indicates the fixed states of the corresponding subset of nodes of the Boolean network.}
\label{fig:MotifIdentification}
\end{figure*}

As an example, consider the Boolean network and its expanded network representation in Figure \ref{fig:MotifIdentification}. The smallest SCCs that satisfy both stable motif requirements are $\left\{A,B,\overline C\right\}$ and $\left\{\overline B, C\right\}$, both of which are shown in Figure \ref{fig:MotifIdentification}(c). The corresponding states for these stable motifs are $\left\{A=1,B=1,C=0\right\}$ and $\left\{B=0,C=1\right\}$, respectively.

So far, we have only defined a stable motif in terms of the expanded network representation. We can extend the concept of stable motif to the original Boolean network to mean the nodes in the Boolean network whose state is specified by a stable motif of the expanded network. These nodes include all the normal nodes that are included in the stable motif of the expanded network and all the normal nodes whose complementary nodes are included in the stable motif of the expanded network. In this way, a stable motif can mean either a set of nodes in the expanded network that satisfy the two requirements or the set of nodes in the Boolean network whose state is specified by a stable motif in the expanded network, depending on the context.

It is important to point out that the stable motifs depend on the structure of the logical rules and thus on the topology of the network being considered. This means that an arbitrary change in the logical rules or in the topology of the network can modify the stable motifs, and we know of no obvious way to determine how the motifs will change without having to reconstruct the expanded network.

\subsection{Network Reduction} \label{sec:2.6}

Once the stable motifs of the network have been identified, the next step is to determine the influence of these nodes on the rest of the network. More specifically, for each stable motif found, we want to find the nodes in the network whose state stabilizes due to the influence of this stable component. We adapt the method previously developed by Saadatpour et al. to simplify the network \cite{AssiehJTB,AssiehPCB}, which has been shown to preserve both the fixed points \cite{ReductionNadil,ReductionVeliz} and the complex attractors of the system \cite{AssiehMath}. This method removes not only the frozen nodes of network \cite{DecimationProcess,SocolarandKauffman,DrosselScaling}, but also the nodes that reach a steady state under the influence of a given combination of source node states. It consists of two main steps:

\begin{enumerate}
\item Identify the nodes whose state is fixed during the dynamics, which we will refer to as
    source nodes; for our case these will initially correspond to the nodes in the stable motif being considered. Modify the Boolean functions of the nodes downstream of the source nodes by setting the state of the source node to its fixed value. If a downstream node's modified function can only have one possible outcome, then this node can be used as a source node itself.
\item Remove mediator nodes (i.e., nodes that have only one incoming edge and one outgoing edge) and irrelevant sink nodes (i.e., nodes that have no outgoing edges). For the case of mediator nodes connect the input of the mediator node to its output. The value of the removed nodes will be determined once the value of their input nodes is known.
\end{enumerate}

For each separate stable motif found in the expanded network, these two steps are repeated recursively until neither of them can be applied anymore.

After network reduction, we obtain a set of states for each stable component, each of which corresponds to the states of the nodes in the stable motif and the states of other nodes which stabilized as a consequence of the stable motif. For each of these sets of states, there is also a reduced network that contains the nodes whose state we still do not know. On each of these reduced networks the whole method will be applied again, starting with the creation of the expanded network (section \ref{sec:2.4}) and ending with the network reduction process, and iteratively until there are no more nodes with unknown states or no new stable motifs are found. For the case where there are no more nodes with unknown states, a fixed point attractor of the system is obtained directly from the state of the nodes of the stable components.

For the cases in which there are no new stable motifs in the final reduced networks, the state of the nodes making up said networks is still unknown. Since our method is based on identifying nodes that stabilize in a specific steady state, the expectation is that these leftover nodes will oscillate in an attractor of the system, while in that same attractor the rest of the nodes will take the steady state value found during the simplification process that leads to the reduced network in consideration. For conciseness we will refer to the final output of our method, consisting of a set of stabilized nodes (and their states) and a (potentially empty) set of nodes with undetermined states as a \emph{quasi-attractor}. Our notion of quasi-attractor is closely related to similar other concepts in the literature, such as the singular steady state originally introduced by Snoussi and Thomas \cite{SnoussiThomas,Siebert} and the logical steady state used by Klamt et al. \cite{Klamt}.

Quasi-attractors are closely related to the attractors of a network, both fixed points and oscillations. For example, if the set of oscillating nodes in a quasi-attractor is empty, then the states of the stabilized nodes will correspond to the node states in a fixed point attractor, thus, this quasi-attractor is in fact a fixed point. More generally, for every attractor of the system there exists a quasi-attractor associated to it; this quasi-attractor is such that every node whose state is fixed in the quasi-attractor will also have its state fixed in the same value in the attractor it is associated to. The proof of this is statement is given in Appendix \hyperref[AppendixProof]{A}.

\subsection{Oscillations and oscillating components} \label{sec:2.7}

Using the expanded network representation on networks that show oscillatory behavior, we have found that it can also be used to identify the oscillating components of a network. To find the oscillating components using the expanded network representation, we search for the largest SCCs that satisfy these properties: (1) the SCC must contain the complementary node of every normal node and vice versa, and (2) if the SCC contains a composite node, all its input nodes must also be part of the SCC. The first of these conditions makes sure that all nodes oscillate, by having both states of every node as part of the SCC. The second condition is a consequence of a composite node and all of its inputs forming an irreducible unit. In this case we look for the largest SCCs because we want to find all the nodes that feed back to each other in the oscillation.

These properties are necessary but not sufficient conditions for a group of nodes to oscillate. We have found that there is a third condition that, if also satisfied, is sufficient (though not necessary) for a group of nodes to oscillate, which is that (3) the oscillating component cannot contain stable motifs composed only of normal and complementary nodes. This extra condition is related to the possibility of the coexistence of a steady state and a complex attractor in the sub-state-space. The simplest example that shows this kind of behavior, which we denote \emph{unstable oscillation}, is shown in Figure \ref{fig:Unstableoscillations}. In general, during the reduction process, we need to find the components that could have unstable oscillations (that is, that satisfy the necessary conditions (1) and (2), but not the sufficient condition (3)) to make sure that we preserve all attractors. As a consequence, we obtain a group of quasi-attractors that may not have a corresponding attractor; we refer to these quasi-attractors as marked quasi-attractors in the step-by-step network reduction algorithm in Appendix \hyperref[AppendixAlgorithm]{B}.

\begin{figure*}
\includegraphics{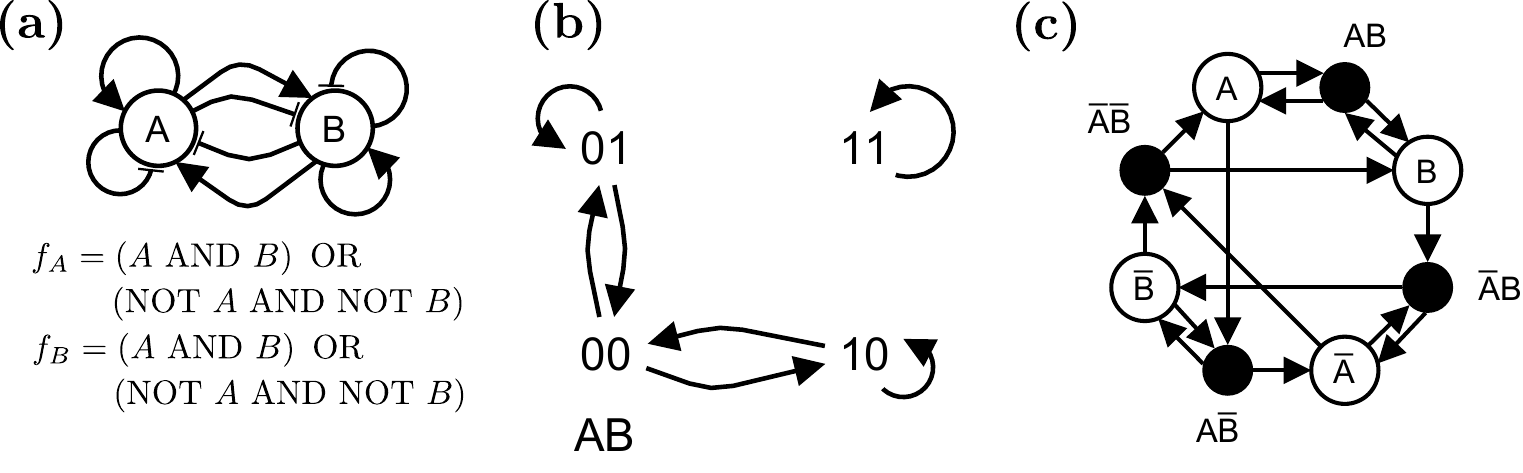}
\caption{An example of a component that has an unstable oscillation. This network has an attractor in which all the nodes oscillate and also has a steady state attractor. (a) The network and its respective Boolean rules. (b) The state transition graph of the network. The nodes of the state transition graph are the states of the system (written in the order $A$,$B$) and the edges represent allowed state transitions when only one node is updated. State 11 is a fixed point as there are no transitions going out of it. States 01, 00, and 10 form a complex attractor. (c) The expanded representation of the network. Note that $\left\{A, B, AB\right\}$ forms a stable motif and that the whole expanded network forms an oscillating SCC.}
\label{fig:Unstableoscillations}
\end{figure*}

Another type of dynamical behavior of the oscillating components that needs to be considered is when the nodes of the oscillating components do not visit all possible states of their sub-state-space in an attractor, which we refer to as an \emph{incomplete oscillation}. Incomplete oscillations are important because a node that is downstream of an oscillating component that displays incomplete oscillations may reach a steady state as a consequence of the nodes of the component only visiting part of their sub-state-space. This type of behavior has been found before in studies of synchronous networks (for example, see Figure 1 in the work by Bilke and Sjunnesson \cite{DecimationProcess}).

As an example of an incomplete oscillation, consider the network shown in Figure \ref{fig:RNoscillations}(a). In this example the nodes $A$ and $B$ oscillate and their state transition graph is shown in Figure \ref{fig:RNoscillations}(b). From the state transition graph one can clearly see the complex attractor is $A,B=\left\{(1,0),(0,0),(0,1)\right\}$. Since once the nodes $A$ and $B$ settle down in the attractor, the state $A=1$, $B=1$ cannot be reached, then node $C$ (whose update function is $f_C=A\hbox{ AND }B$) will necessarily stabilize in the state $C=0$. Note that if either $A$ and $B$ took all possible states in the attractor, or if the update function of $C$ was different, $C$ would also oscillate in the attractor.

\begin{figure*}
\includegraphics{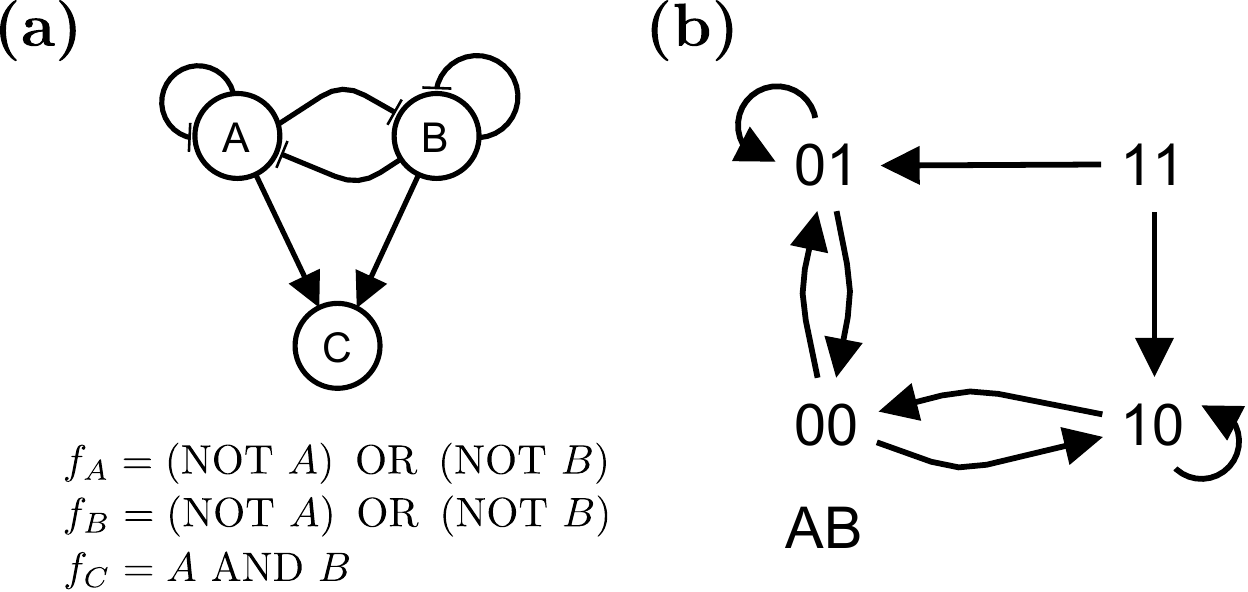}
\caption{An example of a node configuration in which a node can stabilize without the influence of an input signal or a stable motif. In this example $A$ and $B$ oscillate in a complex attractor, but they do not take all possible states of their state transition graph in this attractor. Specifically they miss the $A=1$, $B=1$ state. As a consequence node $C$ stabilizes in the state $C=0$. (a) The node configuration and their respective Boolean rules. (b) The state transition graph of nodes $A$ and $B$. States 01, 00, and 10 form a complex attractor.}
\label{fig:RNoscillations}
\end{figure*}

\section{Results} \label{sec:3}

We implement the network reduction process with a custom Java code. The steps of the network reduction algorithm are described in Appendix \hyperref[AppendixAlgorithm]{B}. The main challenge in implementing the reduction method computationally lies in finding the stable motifs from the expanded network representation. The reason for this is that stable motifs have to be the \emph{smallest SCCs} that satisfy the properties we outlined above, which means that in order to identify all possible stable motifs we need to find all directed cycles that do not contain both a node and its complementary node, since each of them could potentially be the smallest SCC we are searching for. The issue with finding all possible directed cycles is that the time complexity is $O\left((N+E)(c+1)\right)$ (using Johnson's algorithm \cite{CycleAlgorithm}), where $N$ is the number of nodes, $E$ is the number of edges, and $c$ is the number of directed cycles, the latter of which can grow faster than $2^N$ for the worst case scenario of a fully connected network.

Because of the caveats discussed in section \ref{sec:2.7} involving unstable and incomplete oscillations, one may be concerned that other similar cases are not taken into account and that, as a consequence, some attractors could be lost during the reduction process. In Appendix \hyperref[AppendixProof]{A} we address this concern by formally proving that for every attractor of the Boolean network there is a corresponding quasi-attractor that will be found by our reduction method. In order to further test the validity and generality of our network simplification method, we apply it to a previously developed genetic network, and also to an ensemble of random Boolean networks \cite{KauffmanOriginal,KadanoffReview}. For the case of the genetic network, we choose the signaling and regulatory network involved in a type of white blood cell cancer (T cell large granular lymphocyte leukemia or T-LGL leukemia) \cite{TLGLPNAS,AssiehPCB}, while for the ensemble of random Boolean networks we choose the original Kauffman or $N-K$ model \cite{KauffmanOriginal,KadanoffReview}.

\subsection{T Cell Large Granular Lymphocyte Leukemia Network} \label{sec:3.1}

\begin{figure*}
\includegraphics{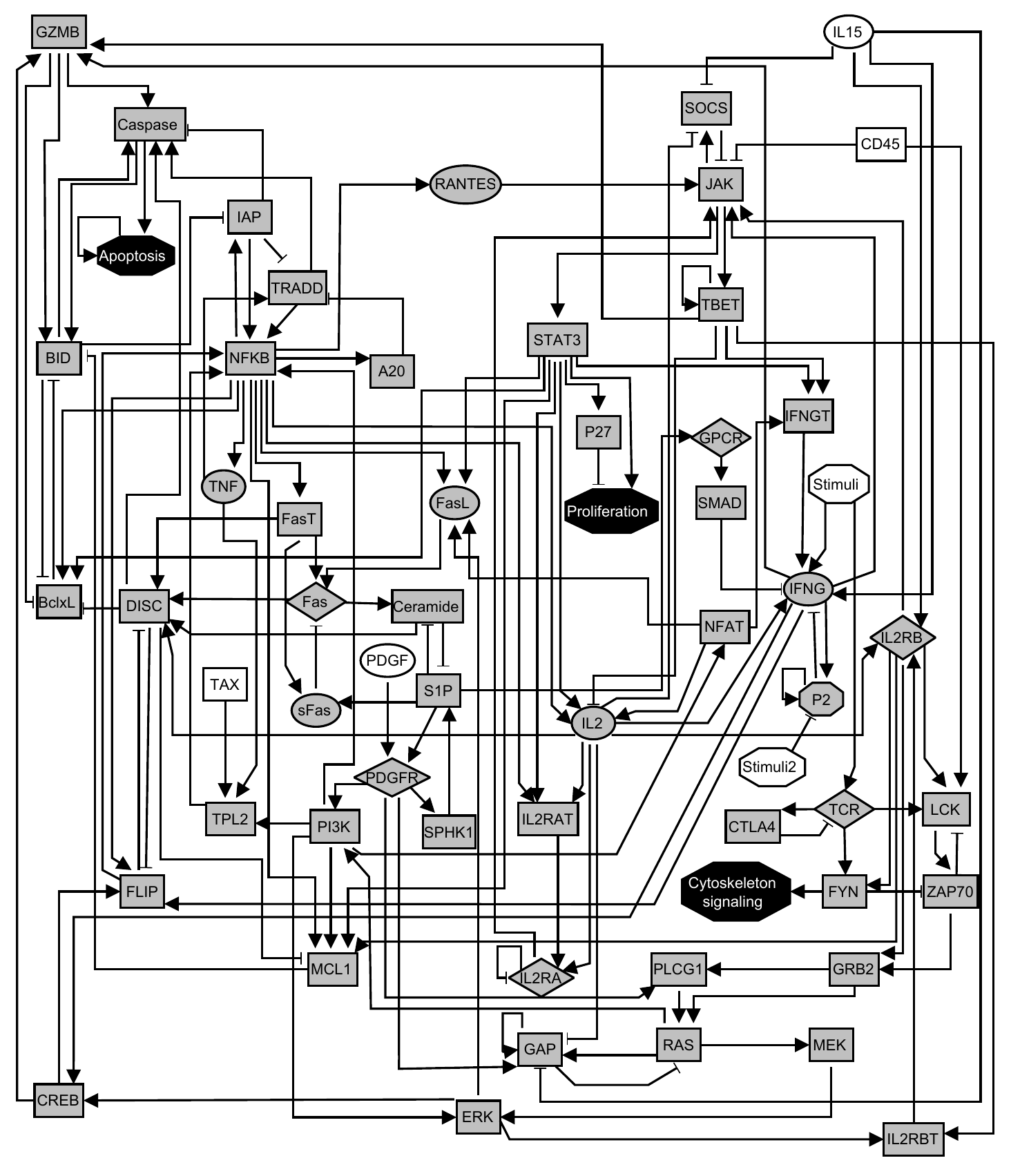}
\caption{The T-LGL leukemia survival signaling network. The shape of the nodes indicates the cellular location or the type of nodes: rectangles indicate intracellular components, ellipses indicate extracellular components, diamonds indicate receptors, and hexagons represent conceptual nodes (Stimuli, Stimuli2, P2, Cytoskeleton signaling, Proliferation, and Apoptosis). Node colors are used to distinguish input nodes (white), output nodes (black) and the rest of the nodes in the network (gray). An arrowhead or a short perpendicular bar at the end of an edge indicates activation or inhibition, respectively. This figure and its caption are adapted from \cite{AssiehPCB}.}
\label{fig:TLGLnetwork}
\end{figure*}

Cytotoxic T cells play a central role in the immune response. When an infection occurs, these T cells detect antigens in infected cells and, in response, trigger a set of intracellular signaling cascades, which lead to the production of cytokines (small signaling molecules) that induce the self-destruction of the infected cells. Normal cytotoxic T cells undergo activation-induced cell death (or apoptosis) after successfully fighting infection, however, in T-cell large granular lymphocyte (T-LGL) leukemia mature cytotoxic T cells survive and, in time, cause an autoimmune disorder. In addition to their abnormal survival, these T cells also show a deregulated activity (higher or lower than in normal T cells) of many signaling pathways and genes.

A Boolean network model of T cell survival in the context of T-LGL leukemia was constructed by Zhang et al. \cite{TLGLPNAS} through an extensive literature search. The logical rules were constructed  such  that the known experimental results in healthy and leukemic cytotoxic T cells were reproduced by the model. These rules, reproduced in Appendix \hyperref[AppendixC]{C}, in many cases do not have a simple form. The resulting network consists of 60 nodes and 142 regulatory edges, with the nodes representing genes, proteins, receptors, or small molecules (Figure \ref{fig:TLGLnetwork}). The network contains 6 nodes with no upstream components which represent external signals (Stimuli, IL15, PDGF, Stimuli2, CD45, and TAX), and also contains 3 output nodes that serve as indicators of biological functions or cell fate (Cytoskeleton signaling, Proliferation and Apoptosis). Two of these input and output nodes play an especially important biological role: Stimuli, which represents antigen stimulation, and Apoptosis, which denotes programmed cell death.

Zhang et al. \cite{TLGLPNAS} used asynchronous Boolean dynamics to show that in the sustained presence of PDGF and IL15 the system may converge to a state that recapitulates all dysregulations in T-LGL leukemia, in addition to the expected state of self-programmed cell death (apoptosis). Later Saadatpour et al. \cite{AssiehJTB} used network reduction to show that, under the presence of said signals, the two attractors found by Zhang et al. (apoptosis and T-LGL leukemia) are the only possible ones. Although in the case studied by Saadatpour et al. the network reduction method was enough to simplify the network to a manageable size (6 nodes), this is actually not the case if one is interested in studying all possible combinations of the input signals, since in many cases the network obtained after reduction is still quite large (30-40 nodes). This then gives an opportunity to apply our reduction method to cases in which previous methods fall short.

We apply our reduction method to all combinations of external signals in the presence of antigen (Stimuli=ON). To validate the quasi-attractors found through our method, we compare them with the attractors obtained by randomly sampling a large number of initial conditions and evolving them using a general asynchronous updating scheme wherein one node is updated at each time step. For all the cases we find that the attractors/quasi-attractors obtained are exactly the same, which together with the proof in Appendix \hyperref[AppendixProof]{A}, shows that the reduction method can indeed be used to find all possible attractors. A table containing all the leukemic attractors is included in Appendix \hyperref[AppendixD]{D} (the Apoptosis=ON attractor, in which all nodes except Apoptosis are inactive, is always a possibility, thus for simplicity we do not include it in the table).

As an example, consider the case in which the IL15 signal is present and the rest of them are not (IL15=Stimuli=ON, PDGF=Stimuli2=CD45=TAX=OFF). Simplifying the network using only the effect of these input signals we obtain a Boolean network of 42 nodes. The expanded network representation of this network has 144 nodes (42 normal nodes, 42 complementary nodes, and 60 composite nodes) and 302 edges. Searching the expanded network, we find four stable motifs from the expanded network representation, which correspond to the states: i) $\left\{\right.$PDGFR = S1P = SPHK1 = ON, Ceramide=OFF$\left.\right\}$, ii) $\left\{\right.$PDGFR = S1P = SPHK1 = OFF$\left.\right\}$ , iii) $\left\{\right.$TBET = ON$\left.\right\}$, and iv) $\left\{\right.$P2 = ON$\left.\right\}$. These motifs are shown in Fig. \ref{fig:TLGLMotifs}. Performing network reduction using each of these four stable motifs leads to reduced networks of widely varying sizes: 3, 39, 35, and 39 nodes, respectively. The reduced network due to the first stable motif consists of three disconnected nodes with self-loops (one negative and two positive ones), and can gives rise both to apoptosis or the T-LGL leukemia attractor. For the networks corresponding to the three remaining motifs we need to continue the reduction process and search for stable motifs in each of these networks.

\begin{figure*}
\includegraphics{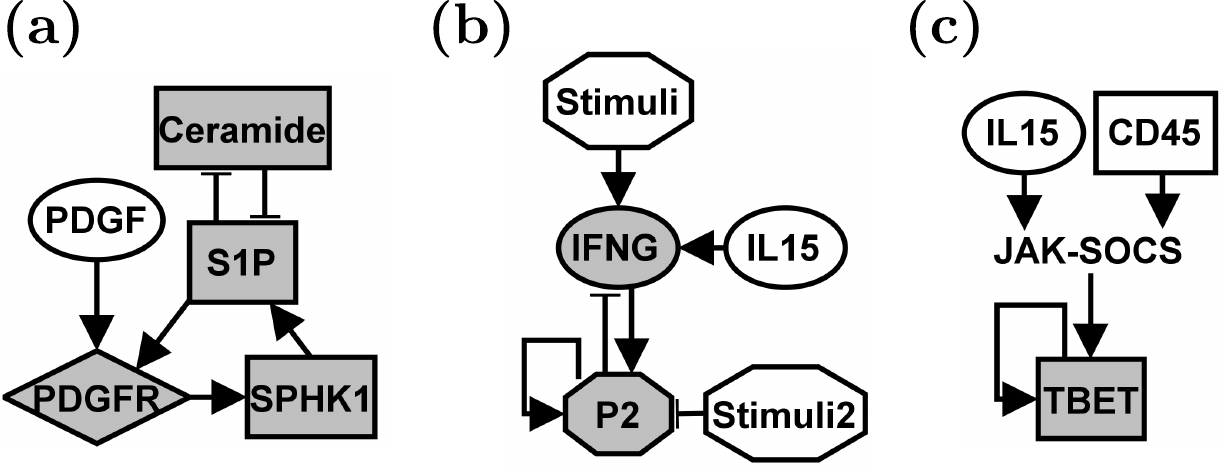}
\caption{The three stable motifs of the T-LGL leukemia network found most often during the reduction process. The actual motifs found and the states in which each of these motifs can stabilize vary depending on the active signals. We also show the input signals (white nodes) that affect these motifs directly or almost directly (for the motif in (c)). (a) The PDGFR-S1P-SPHK1-Ceramide motif, which represents the ceramide/sphingomyelin pathway and the platelet derived growth factor receptor. (b) The IFNG-P2 motif, which is related to the control of the cytokine interferon gamma in CTLs. (c) The TBET motif, which represents the regulation of the T-box transcription factor.}
\label{fig:TLGLMotifs}
\end{figure*}

The stable motifs we find during reduction of the remaining networks are likely to include some of the same four stable motifs found previously (as long as that specific motif was not used to obtain the specific network in consideration), but may also contain stable motifs different from those previously found. For example, the network due to the second motif has two stable motifs, both of which were found in the previous network (iii and iv). The network obtained from the third motif has four stable motifs, one of which is different from the motifs previously found (with states $\left\{\right.$MEK = ERK = RAS = PCLG1 = IL2RBT = IL2RB = GRB2 = ON$\left.\right\}$), and three of which are the same as previous motifs (i,ii, and iv). For the network corresponding to the fourth motif we find three stable motifs, all of which had already been found (i, ii, and iii). If we continue the reduction process we find that the network due to the second motif gives rise to the apoptosis attractor after 2-3 more network reductions (depending on the stable motifs used for the reduction), while the third and fourth motifs can produce both the T-LGL leukemia or apoptosis attractor after 2-4 more network reductions.

\subsection{Ensemble of Random Boolean Networks} \label{sec:3.2}

Random Boolean networks were first introduced by S. Kauffman as a model to understand the general dynamical properties of gene regulation and cell differentiation processes \cite{KauffmanOriginal}, and have been extensively studied ever since \cite{KadanoffReview}. In addition to the original Kauffman networks, several variants of random Boolean networks that could be considered more biologically realistic have also been introduced (for example, models with arbitrary degree distributions \cite{MaxScaleFree}, canalizing functions \cite{Canalyzing}, threshold functions \cite{ZanudoRTN}, or multiple discrete states \cite{Multilevel}). All of these models share the distinguishing feature of the original model, that is, the existence of three dynamical regimes: i) an ordered one in which similar initial conditions typically converge after a transient time, ii) a disordered regime in which the system becomes very sensitive to small changes in the initial conditions, and iii) a critical regime, poised at the boundary of the ordered and disordered regimes, in which perturbations retain their size. Evidence suggests that the gene regulatory networks of living organisms operate near the critical regime \cite{Critical1,Critical2}.

For simplicity we use the original Kauffman or $N-K$ model to test our network reduction method in randomly constructed networks. The $N-K$ model consists of an ensemble of Boolean networks with $N$ nodes in which every element has $K$ input nodes. To construct one of the networks in this ensemble, the $K$ input nodes of every element are chosen randomly from the rest of the network. Every node is then assigned one of the $2^K$ possible Boolean functions at random. To use what is considered the most biologically realistic case of this model, we use different network sizes with degree $K=2$, which is the case at which this ensemble operates in the critical regime \cite{KadanoffReview}. For this case it has been shown that the number of relevant nodes increases as $N^{1/3}$, and that the number of asynchronous attractors grows as a power law in $N$ \cite{DrosselAsynchronous}; one would then expect the existence of an efficient method to find this relatively small number of attractors with a relatively small fraction of relevant nodes.

To test the validity of our network reduction method, we compare the final reduced networks obtained with the asynchronous attractors of the original network for an ensemble of Kauffman networks of different sizes ($N=$ 5, 10, 15, 18, 25, 50, 100, 150 and 200, with an ensemble size of $\Omega=200$ for $N\leqslant18$ and $\Omega=100$ for $N\geqslant25$). For networks up to size $N=100$ we were able to use the exact Johnson's algorithm to find the networks' cycles, however, for larger networks we needed to restrict the search to cycles of less than $40$ nodes. To compare the system's attractors with the result of our reduction method (the quasi-attractors) we focus only on the nodes whose state stabilizes; if for every quasi-attractor there is one attractor that exactly matches the stabilized states of the quasi-attractor, we then say the results are compatible. Note that since our reduction method does not predict the actual state of the nodes remaining after reduction of the stable motifs, it is not necessary for these node states to agree. The na\"ive expectation is that these remaining nodes oscillate in the attractors. If this is indeed the case, we then say that the results are equivalent.

To find the attractors for small networks ($N\leq18$), we construct the asynchronous state transition graph, a directed graph on the unit $N$-cube whose nodes represent the states of the system and whose edges are the allowed transitions between states as a result of a single node's update \cite{GlassAsynchronous,ThomasReview}. In the asynchronous state transition graph  attractors correspond to sink SCCs, that is, SCCs of states whose outgoing edges can only lead to other states of this same SCC. More specifically, fixed points correspond to single states with self-edges and no other outgoing edges, while complex attractors correspond to sink SCCs made up of more than one state.

For large networks ($N\geq25$) we cannot construct the whole asynchronous state transition graph because of its enormous size, so we resort to sampling the state space to look for attractors. In particular we use a method based on the one by Wang et al. \cite{RuiShengAsynch}. In this method we construct part of the state space by starting from a large number $N_S$ of initial conditions and following the system's trajectory for $T$ effective time steps, that is, we make sure that at every step one node changes value (unless a fixed point is reached, in which case no nodes can change value). To find the attractors from the resulting partial state transition graph we use the same criteria as in the complete state transition graph. To avoid false positives, we check the validity of every attractor obtained with this method by starting from one of the states in the putative attractor, updating it $T_{transient}$ effective time steps, then creating a partial state transition graph with $T_{search}$ effective time steps, and finally searching for attractors in the resulting state transition graph. For our case we use $N_S=5000$, $T=300$, $T_{transient}=1500$ and $T_{search}=50000$.

\begin{figure}
\includegraphics{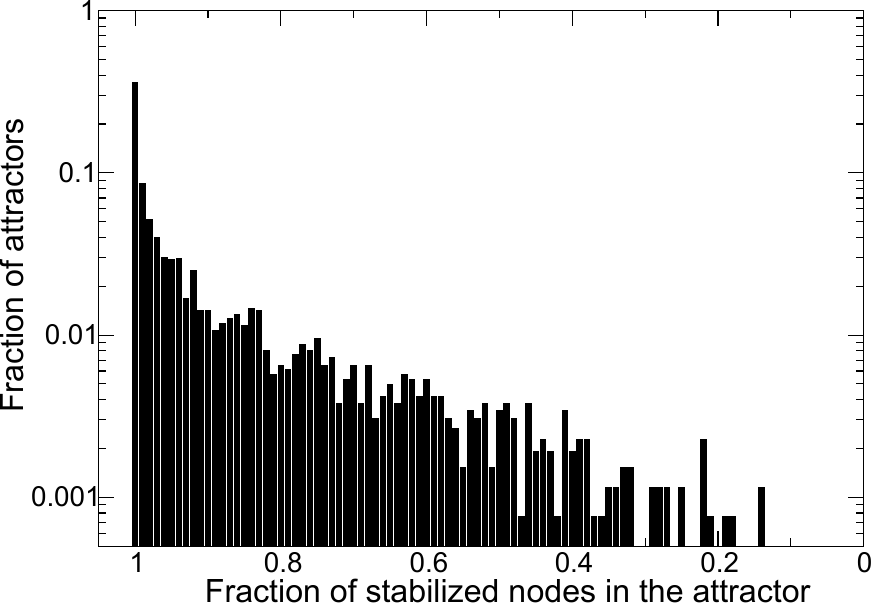}
\caption{Distribution function for the fraction of stabilized nodes in an attractor for $N=100$ for an ensemble of $\Omega=1000$ networks. Note the logarithmic scale in the vertical axis.}
\label{fig:StabilizedNodes}
\end{figure}

Of the total 1300 networks that were compared we find that in all but five networks (all of which had $N\geq100$) the results of our method and of the attractor identification/sampling methods were equivalent. For the remaining five networks we find that the results were compatible, that is, although the state of the nodes predicted to have stabilized by the reduction process matched in the results of both methods, there were some nodes that did not stabilize according to the reduction method that were found to take a fixed state in the attractors found by sampling (i.e. using the partial state transition graph). We reiterate that this disagreement does not mean that our method is incorrect; our reduction method does not actually predict the state of the nodes remaining in the final reduced networks. It is also worth pointing out that in most quasi-attractors the fraction of nodes that don't stabilize is relatively small, as shown in Figure \ref{fig:StabilizedNodes}. Although these remaining nodes are expected to oscillate in the attractor, this is actually not necessary, as discussed in section \ref{sec:2.7}.

\begin{figure}
\includegraphics{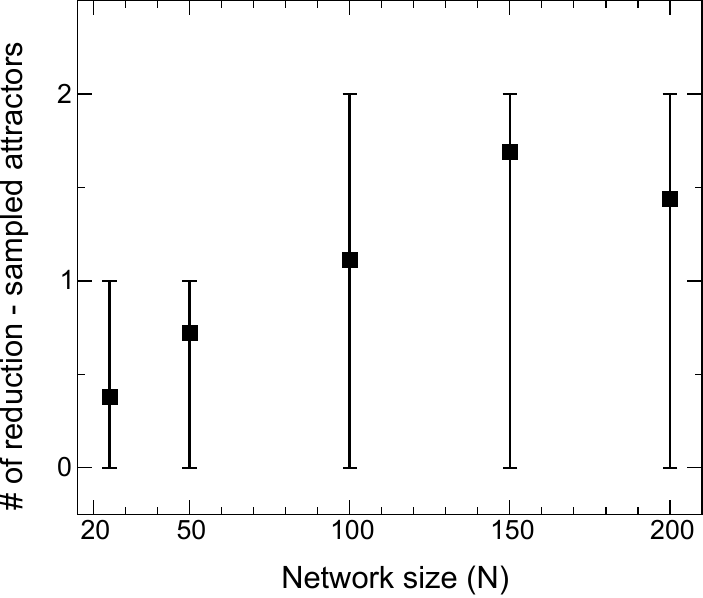}
\caption{Difference in the number of attractors found between the reduction and sampling methods. The squares represent the average difference in the number of attractors between the two methods, while the lower and higher limits of the bars represent the 20th and the 80th percentile of the distribution of attractor number difference. In all the cases the difference is zero or positive, that is, the reduction method never finds less attractors than the sampling method. For all network sizes shown an ensemble size of $100$ networks was used.}
\label{fig:RNattractors}
\end{figure}

We also compare the number of attractors found by each of the three methods. For small networks we find that network reduction and the exact method always find the same number of attractors/quasi-attractors. For large networks we find that the reduction method always finds either more or the same number of attractors as the sampling method. In Figure \ref{fig:RNattractors} we show the average, the 20th percentile, and the 80th percentile of the difference in the number of attractors found by the two methods for large networks. To make sure that the quasi-attractors found by reduction are real attractors, we check their validity by constructing a partial state transition graph  just as we did for the sampling method. In some cases, in which the putative attractor was expected to have a large number of oscillating nodes ($\gtrsim$20), an attractor could not always be found with this method. For these cases we analyzed the trajectory in the partial state transition graph and identified which nodes changed states and which did not, and compared them with the putative attractors. In all cases we find that the results are equivalent (or compatible for the five networks mentioned before). This result, together with our proof in Appendix \hyperref[AppendixProof]{A}, shows that every attractor of the system has a corresponding quasi-attractor of the reduction method.

Finally we compare the time performance of the methods. In Figure \ref{fig:RNtimes}(a) we show the average time to completion of the three methods on the same ensemble for different network sizes. Although at very small network sizes ($N\leq10$) the exact method (the whole state transition graph) is on average faster than the reduction method, for larger networks the reduction method is faster than the others. For large networks the reduction method is not only faster on average than the sampling method, but the distribution of times shows that, for all sizes, network reduction takes less than a second for most of the networks (see Figure \ref{fig:RNtimes}(b) for the $N=100$ case). This is true even at larger network sizes. For example, for $N=$ 100, 150, and 200 we have 88\%, 76\%, and 71\% of the networks, respectively, take less than one second. In contrast, for the sampling method none of the networks of these same sizes take less than one second. These results suggest that our method is not only more effective than state space sampling in the sense that it does not miss any of the attractors in the system, but it is also significantly more efficient in terms of time performance.

\begin{figure*}
\includegraphics{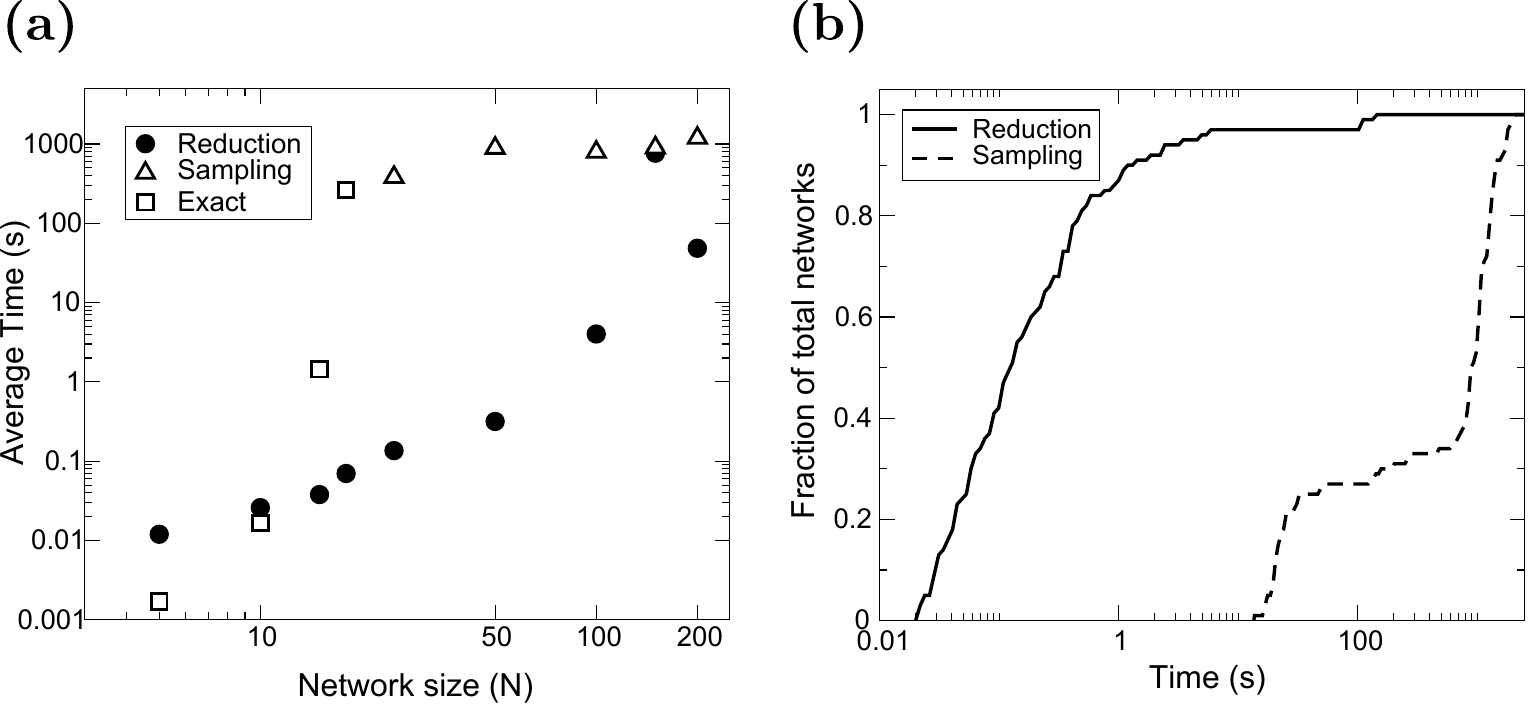}
\caption{Time performance of the different methods (see also the main text). (a) The average time it takes to find the attractors of a network for each method. Both axes are shown in a logarithmic scale. The bump shown in the $N=150$ case for the reduction method is the consequence of a network in the ensemble that took an unusually long time because of the large number of cycles in the network. (b) Cumulative distribution functions for the completion times in the $N=100$ ensemble. Note that the horizontal axis has a logarithmic scale.}
\label{fig:RNtimes}
\end{figure*}

\section{Discussion} \label{sec:Discussion}

\begin{figure}
\includegraphics{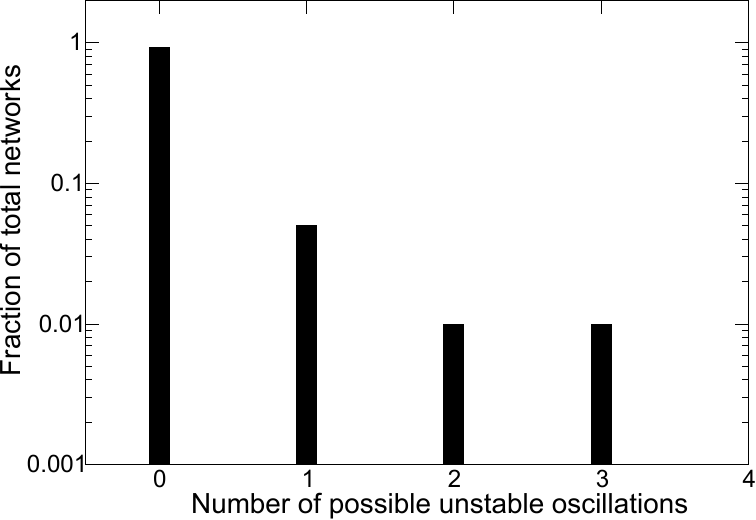}
\caption{Distribution function for the number of components that can display unstable oscillations in the $N=100$ ensemble. Note the logarithmic scale in the vertical axis. For approximately 90\% of the networks we find no such components. For the rest there are usually very few of them, with attractor sampling methods suggesting that none of them actually display unstable oscillations.}
\label{fig:UnstableAttrs}
\end{figure}

\begin{figure*}[t!]
\includegraphics{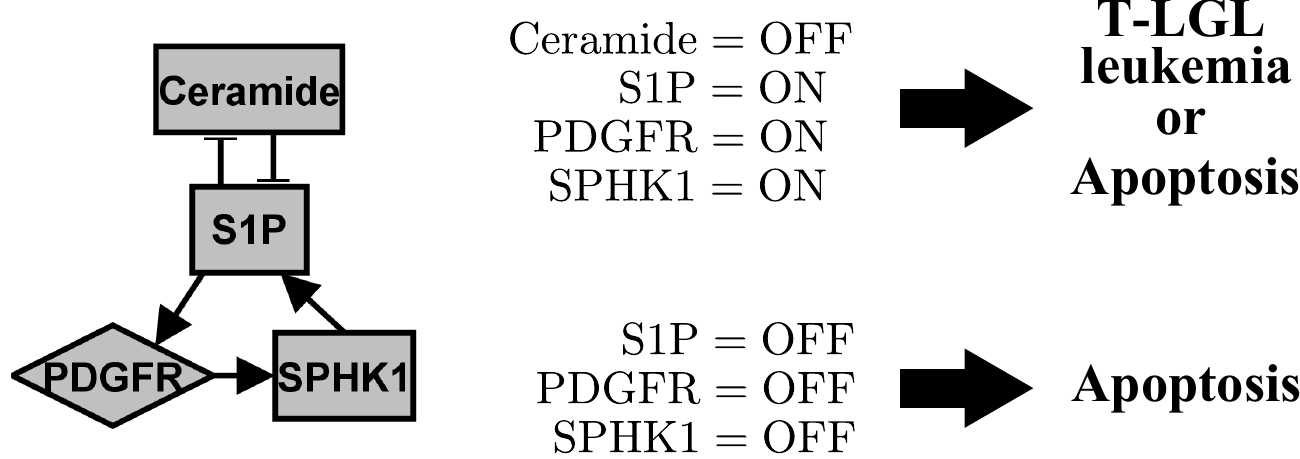}
\caption{The PDGFR-S1P-SPHK1-Ceramide motif, its allowed stable states, and the cell fates associated to them. For both set of stable states the apoptosis cell fate can be reached depending on the signals present, the asynchronous update order, and on the initial state. On the other hand, the T-LGL leukemia cell fate can only be reached if the motif stabilizes in the $\left\{\right.$PDGFR=S1P=SPHK1=ON, Ceramide=OFF$\left.\right\}$ state, regardless of the signals present, the asynchronous update order or of the initial state.}
\label{fig:PDGFRmotif}
\end{figure*}

In this work we have presented a novel reduction method that can greatly simplify a network, allowing us to deal with system sizes of an order of magnitude larger than what is possible through full state space searching methods. This reduction method, framed in the Boolean logic framework, uses an expanded representation of the network that explicitly includes the nature and logic of the interactions, which allows us to identify the network motifs that can stabilize in a steady state and use them to simplify the network. To test the validity of our method we applied it to a genetic network (the T-LGL leukemia survival network) and an ensemble of random Boolean networks of various sizes. We find that the results of our method always agree with the behavior of all networks tested.

An important point to make is that our reduction method does not actually reduce the complexity of the attractor-finding problem. What it does is to transfer the complexity of the problem from the state space size to how complex the network is, specifically, to how many cycles the network contains. The reasoning behind this transfer is that we want to take advantage of the sparseness of biological networks to make the problem more tractable for larger network sizes than a brute force approach (i.e. sampling the whole state space) would accomplish. Indeed, our results show that our reduction method is able to outperform a sampling of the state space both in terms of attractors/quasi-attractors found and in terms of computational time (see Figures \ref{fig:RNattractors} and \ref{fig:RNtimes}). Added to this is the fact that in not a single case did state space sampling find attractors not compatible with the results of network reduction, as was expected from our proof that the reduction method preserves all attractors.

A surprising result of applying our method to the ensemble of random networks was that for most cases it actually predicted the asynchronous dynamic repertoire of these networks, that is, it found which nodes stabilize and which of them oscillate in the attractors. This agreement is not trivial because our reduction method does not predict the state of the nodes remaining in the final reduced networks. What is the reason for this success? To answer this question, we need to remember that our reduction method is based on finding stable motifs and using them to simplify the network. Therefore, in the final reduced networks, all nodes that have adopted a fixed state are either part of a stable motif, or were stabilized by the influence of one of these motifs or by an input signal. If these were the only ways the state of a node can stabilize, then the remaining nodes would have to oscillate in the attractors. It is possible, however, (though not very likely) that a node adopts a fixed state without the influence of a stable motif. Specifically, this can happen when a node is downstream of an oscillating SCC that does not visit all possible states of their sub-state-space in an attractor. We discussed this kind of oscillatory behavior in section \ref{sec:2.7} and called it an incomplete oscillation.

The simplest way to obtain this kind of incompletely oscillatory attractors is to have two nodes with self loops that also form a feedback loop between themselves. For Kauffman $K=2$ networks one can show that this node configuration with the choice of Boolean rules that show these incomplete oscillations appears in a network with probability $1/32N(N-1)$. If one adds to this that the downstream nodes of these node configurations need to have very specific update rules in order to stabilize as a consequence of these oscillations, it is then not surprising that we found only very few of these cases in our network ensembles. More work in this direction is needed to generalize the reduction method to also identify these sets of fixed-state nodes.

Based on our discussion in section \ref{sec:2.7}, one may wonder if unstable oscillations are also a problem for our reduction method, and exactly how much of a problem they are. We have found that, for the ensemble of Kauffman critical networks, unstable oscillations are even more rare than incomplete oscillations. For example, the simplest and most probable way to have unstable oscillations (the one shown in the example in Figure \ref{fig:Unstableoscillations}) appears only with probability $1/128N(N-1)$ in a $K=2$ Kauffman network. To make sure that unstable oscillations are rare, we searched for the SCCs that could display this behavior.  For most networks ($\approx90\%$) we have found no such SCCs, and even when they exist, there are only a few of them, as illustrated on Figure \ref{fig:UnstableAttrs}. Not only that, but the few that have the possibility to display unstable oscillations do not actually seem to do so (satisfying property (1) and (2) but not property (3) of the conditions stated in section \ref{sec:2.7} is only a sufficient condition to display unstable oscillations), as none of them were found by the attractor sampling methods in these networks. It is noteworthy that these SCCs that have the possibility to display unstable oscillations are responsible for the marked quasi-attractors referred to in the network reduction algorithm in Appendix \hyperref[AppendixAlgorithm]{B}. More work in this direction is needed to find more stringent conditions that identify the unstable oscillations.

Using our network reduction method on the T-LGL leukemia network we were able to find the recurring stable motifs (Figure \ref{fig:TLGLMotifs}). A natural question to ask is if the stable motifs found during network reduction have any special biological significance. Indeed, all of these motifs do play a significant role in the biology of T-LGL leukemia: PDGFR-S1P-SPHK1-Ceramide (ceramide/sphingomyelin pathway) has been shown to be essential for T-LGL cell survival \cite{PDGFRTLGL}; moreover, TBET (T-box transcription factor) and IFNG-P2 are related to the control of two of the main cytokines produced by cytotoxic T cells (interleukin 2 and interferon gamma, respectively), whose low production is one of the characteristics of T-LGL leukemia \cite{TLGLPNAS,IL2TLGL,IFNGTLGL}.

An interesting observation is that these three motifs are directly regulated by five of the input signals of the T-LGL network (or almost directly in the case of TBET, see Figure \ref{fig:TLGLMotifs}), which suggest their importance in cell-fate determination for cytotoxic T cells. This appears to be especially true for the PDGFR-S1P-SPHK1-Ceramide motif, whose components need to always be in a specific set of states (PDGFR=S1P=SPHK1=ON, Ceramide=OFF) for the T-LGL leukemia cell fate to be possible (see Figure \ref{fig:PDGFRmotif}). This is consistent with and actually explains the previous finding by Zhang et al. that an intermittent signal of PDGF (coupled with the sustained presence of IL15) was enough for T-LGL leukemia to be possible: this happens because the intermittent signal is enough to stabilize the PDGFR-S1P-SPHK1-Ceramide motif in the required state for the T-LGL leukemia cell fate to become possible.

To summarize, our study showed that the novel network reduction method we propose allows us to overcome the limitations related to the vast state space of large networks by taking advantage of the stable components naturally present in biological networks. This is accomplished by transferring the complexity of the problem from the size of the state space to the complexity of the network, namely, the number of cycles it has. For most cases, we also found that our method goes beyond reducing the network size and can actually predict the asynchronous dynamical repertoire in the attractors of the system. For the case of the T-LGL leukemia network we found that the stable components identified by our method play an important role in the biology of T-LGL leukemia and appear to be used as a cell-fate determination mechanism for cytotoxic T cells. Overall, our method adds a powerful technique to the set of tools available to infer the dynamical behavior of a network based on the topology and the nature of the interactions, a technique that is flexible enough that it can be applied to a large variety of biological networks.

\section*{Acknowledgements} \label{Acknowledgements}

This work was supported by NSF grants IIS 1161001 and PHY 1205840. We would also like to thank Zhongyao Sun and Rui-Sheng Wang for fruitful discussions.

\section*{Appendix} \label{Appendix}

\subsection*{Appendix A - Proof of the conservation of attractors by the expanded network/network reduction method} \label{AppendixProof}

In the following we use $V=\left( v_1, v_2, \ldots, v_N \right)$ to represent the $N$ nodes of the Boolean network, $\Sigma=\left(\sigma_1, \sigma_2, \ldots, \sigma_N \right)$ to represent the states of these nodes, and $F=\left(f_1, f_2 , \ldots, f_N \right)$ to represent the Boolean functions associated to each of these nodes.

We assume, for convenience, that the Boolean functions in $F$ satisfy these three properties:
\begin{enumerate}
  \item The $f_i$'s do not take constant values (i.e. $f_i\neq0$ and $f_i\neq1$).
  \item If $f_i$ depends on the state of node $v_j$, $\sigma_j$, then there must be at least a pair of states $\Sigma^{(1)}$ and $\Sigma^{(2)}$ with $\sigma_j^{(1)} \neq \sigma_j^{(2)}$, and $\sigma_k^{(1)} = \sigma_k^{(2)}$ for $k \neq j$, such that $f_i(\Sigma^{(1)}) \neq f_i(\Sigma^{(2)})$. This is equivalent to requiring that the Boolean derivative of $f_i$ with respect to $\sigma_j$ is nonzero for at least a pair of network states \cite{BooleanDeriv}.
  \item If, for a state of a subset of the inputs of $f_i$, one has $f_i=1$ (whatever the states of the remaining inputs), then the disjunctive form of $f_i$ must have at least one of its conjunctive clauses equal to 1 when evaluated at the state of this subset of nodes.
\end{enumerate}
The first property makes sure we have no source nodes. For our purposes this can be assumed without loss of generality, because even if that is not the case, we can use the reduction method of Saadatpour et al. \cite{AssiehMath} and remove all source nodes while preserving all attractors. The second property can also be assumed without any loss of generality; it is just a way of stating that we consider $f_i$ to depend on $\sigma_j$ only if it explicitly depends on $\sigma_j$ for at least a pair of network states. The third property is also general, since one can construct the respective disjunctive normal form from the truth table of the Boolean function.

Our first proposition states that the stable motifs found from the expanded network are such that the corresponding states of these motifs are partial fixed points of the Boolean rules of the nodes involved.
\begin{myprop} \label{PartFixedPoints}
Let $M=\left(V_{m_1}, V_{m_2}, \ldots, V_{m_l},\right.$$\left.V_{m_{l+1}}, V_{m_{l+2}}, \ldots, V_{m_L}\right)$ be a stable motif in the expanded network representation, where $V_{m_1}, V_{m_2}, \ldots, V_{m_l}$ can either be a normal node or a complementary node, and where $V_{m_{l+1}}, V_{m_{l+2}}, \ldots, V_{m_L}$ are composite nodes. We denote $M_{state}=\left(\sigma_{m_1}=b_{m_1},\right.$$\left.\sigma_{m_2}=b_{m_2}, \ldots, \sigma_{m_l}=b_{m_l} \right)$, with $b_{m_j} \in \left\{0, 1\right\}$ as the corresponding state of $M$ in the network state $\Sigma$: $b_{m_j}=1$ if it is a normal node, and $b_{m_j}=0$ if it is a complementary node. Then, for any normal node $v_{m_j}$ or complementary node $\overline{v}_{m_j}$ in $M$ and for any network state $\Sigma_M$ in which $\sigma_{m_k}=b_{m_k} \forall m_k \in \left\{m_1, m_2, \ldots, m_l\right\}$, we will have $f_{m_j}(\Sigma_M)=b_{m_j}$.
\end{myprop}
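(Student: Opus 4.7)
The plan is to exploit the strongly connected structure of $M$ together with the disjunctive normal form (DNF) representation of each update function. For any normal or complementary node $V_{m_j}\in M$ (so $1\leq j\leq l$), being part of the SCC forces the existence of at least one incoming edge from some other node of $M$. I would argue that this in-edge pinpoints a single conjunctive clause of the DNF of $f_{m_j}$ (when $V_{m_j}$ is a normal node) or of $f_{\overline{m_j}}$ (when it is a complementary node) that evaluates to $1$ whenever the nodes of $M$ are set to their motif states, which immediately yields $f_{m_j}(\Sigma_M)=b_{m_j}$.

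To carry this out I would split on the source of the chosen in-edge. If it originates at a normal or complementary node $V_{m_i}$ with $i\leq l$, then by the construction of the expanded network in Section \ref{sec:2.4} the edge is present because the literal encoded by $V_{m_i}$ is by itself a conjunctive clause of $f_{m_j}$ or of $f_{\overline{m_j}}$; setting $\sigma_{m_i}=b_{m_i}$ makes that literal, and hence the clause, true. If instead the in-edge starts at a composite node $V_{m_i}$ with $i>l$, the second defining property of stable motifs guarantees that every input of $V_{m_i}$ is itself a normal or complementary node lying in $M$. The composite node encodes exactly one multi-literal conjunctive clause of $f_{m_j}$ or $f_{\overline{m_j}}$, and each of its input literals is satisfied by $M_{state}$, so the clause again evaluates to $1$. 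In both cases the satisfied clause involves only nodes of $M$, so its truth value does not depend on the states of nodes outside $M$, which is exactly the strength the proposition demands. Translating back to the Boolean network then gives $f_{m_j}(\Sigma_M)=1=b_{m_j}$ when $V_{m_j}$ is normal, and, using $f_{\overline{m_j}}=\neg f_{m_j}$, $f_{m_j}(\Sigma_M)=0=b_{m_j}$ when $V_{m_j}$ is complementary.

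The main obstacle I anticipate is notational rather than conceptual: one has to be careful that the conjunctive clause identified by the in-edge really lives in the correct DNF (that of $f_{m_j}$ versus that of $f_{\overline{m_j}}$), and that the inputs of a composite node coincide exactly with the literals of the clause it encodes. Both facts follow directly from the construction in Section \ref{sec:2.4}, where the DNFs of $f_i$ and $\overline{f}_i$ are built independently and one composite node is attached per multi-literal clause with incoming edges from each literal. Once this correspondence is stated precisely, the rest of the argument is essentially mechanical. In particular, the first stable-motif property (no node and its complement both in $M$) is needed only to guarantee that the state assignment $\sigma_{m_j}=b_{m_j}$ is self-consistent, and property $3$ of the preamble is not needed for this direction, since the implication ``some DNF clause is $1$ $\Rightarrow$ the function is $1$'' is tautological.
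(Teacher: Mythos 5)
Your proposal is correct and follows essentially the same route as the paper's own sketch: both arguments use the strong connectivity of $M$ to extract an in-edge from within the motif, identify the corresponding conjunctive clause of the disjunctive normal form whose literals are all states of nodes in $M$ (invoking the second stable-motif property when the in-edge comes from a composite node), and conclude by the tautology that a satisfied clause forces the whole function to evaluate to $1$. The only cosmetic difference is that the paper first performs a change of variables so that all motif states equal $1$ (justified by the first stable-motif property), whereas you keep the normal/complementary cases explicit and pass through $f_{\overline{m_j}}=\neg f_{m_j}$ at the end; your added remarks on which hypotheses are actually used are accurate.
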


Let us sketch the proof for this proposition. First, because a stable motif only contains a node or its complementary node, one can do a change of variables in the original network so that the state of the nodes in $M$ is $1$ in $M_{state}$. This simplifies the problem, since we now just need to show that $f_{m_j}(\Sigma_M)=1$. Now, the Boolean function of node $v_{m_j}$ has the form $f_{m_j}=S_1\hbox{ OR }S_2\hbox{ OR }\cdots\hbox{ OR }S_n$, where $S_i=s_1\hbox{ AND }s_2\hbox{ AND }\cdots\hbox{ AND }s_I$ and where the $s_k$'s are either a node state or its negation. Since every node $v_{m_j}$ in $M$ has at least an input from another node in $M$, then this means that one of the $S_i$'s of $f_{m_j}$ corresponds to this node input. If we call $S_j$ the corresponding $S_i$, then it must be such that all the $s_k$'s of this $S_j$ must be the states of nodes in $M$. As a consequence $S_{j}(\Sigma_M)=1$, since in the state $\Sigma_M$ all the states of nodes in $M$ is 1.

The reverse of this proposition is also true, that is, if for a given set of node states updating any of the states in the set gives back the same state, regardless of the state of any node outside of the set, then this set of states will correspond to a set of stable motifs in the expanded network representation:

\begin{myprop} \label{Stablemotifs-fixedpoints}
Let $M_{state}=\left(\sigma_{m_1}=b_{m_1}, \sigma_{m_2}=b_{m_2},\right.$ $\left.\ldots, \sigma_{m_l}=b_{m_l} \right)$ be the state of a set of nodes such that if $\Sigma_M$ is any network state in which $\sigma_{m_k}=b_{m_k} \forall m_k \in \left\{m_1, m_2, \ldots, m_l\right\}$, then $f_{m_j}(\Sigma_M)=b_{m_j}$. Then (i) there is a set of stable motifs $\left\{M_n\right\}$ in the expanded network representation such that each of the $M_n$'s contain only normal nodes or complementary nodes of the nodes whose state is specified in $M_{state}$ (normal nodes if $b_{m_k}=1$, and complementary nodes if $b_{m_k}=0$) and in which all other nodes in the $M_n$'s (if any) will be composite nodes made up of the normal nodes or complementary nodes in the corresponding $M_n$, and (ii) the nodes whose state is specified in $M_{state}$ but that are not included in the set of stable motifs $\left\{M_n\right\}$ will be downstream of the nodes in at least one of the stable motifs.
\end{myprop}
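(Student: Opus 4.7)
The plan is to apply a change-of-variables reduction first: by swapping $v_{m_k}$ with $\overline{v}_{m_k}$ whenever $b_{m_k}=0$ (and rewriting the associated $f_i$'s in DNF accordingly), I may assume without loss of generality that $b_{m_k}=1$ for every $k$. Under this reduction the hypothesis becomes $f_{m_j}(\Sigma_M)=1$ for every network state $\Sigma_M$ that restricts to all-ones on $M_{state}$, regardless of the states of the remaining nodes, and the goal shrinks to producing stable motifs consisting only of normal nodes (together with composites built on them).

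For each $v_{m_j}$, Property~3 of the $f_i$'s then yields a conjunctive clause $C_{m_j}$ in the DNF of $f_{m_j}$ that evaluates to $1$ on the partial all-ones assignment to $M_{state}$. I would then argue that this clause must consist of positive literals drawn entirely from $M_{state}$: a literal involving a node $v_r\notin M_{state}$ would leave the clause's value at the partial assignment indeterminate (contradicting its being $1$ there for arbitrary extensions), while a negated literal $\overline{\sigma}_{m_k}$ with $m_k\in M_{state}$ would immediately zero the clause at the all-ones state. Accordingly, in the expanded network, $C_{m_j}$ appears either as a direct edge $v_{m_k}\to v_{m_j}$ (single-literal case) or as a composite node whose inputs are all normal copies of nodes in $M_{state}$ and whose output is $v_{m_j}$.

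Next I would assemble these ingredients into a subgraph $G'$ of the expanded network whose vertices are the normal copies of $M_{state}$ together with one composite per multi-literal $C_{m_j}$, and whose edges are the corresponding expanded-network edges. Every vertex of $G'$ has in-degree at least $1$: normal nodes receive the edge coming from their stabilizing clause, composite nodes receive edges from their conjuncts. Consequently the condensation of $G'$ is a DAG whose source SCCs each contain a cycle (possibly a self-loop) and hence are non-trivial. Any source SCC $T$ automatically satisfies the two stable-motif conditions from section~\ref{sec:2.5}: property~(a) holds because $T\subseteq G'$ contains no complementary nodes; property~(b) holds because every composite in $T$ has its inputs in $G'$, and a source SCC of the condensation receives no edges from outside itself within $G'$, forcing those inputs to lie in $T$ as well. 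Descending by finite induction to an inclusion-minimal strongly connected subgraph of the expanded network inside $T$ that still satisfies (a) and (b) yields a genuine stable motif $M_n$.

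For part~(ii) I would collect one such $M_n$ from every source SCC of the condensation of $G'$ to form $\{M_n\}$. Any normal node of $M_{state}$ is a vertex of $G'$ and either (i) lies in some source SCC $T$, in which case strong connectivity of $T$ places it either inside the extracted $M_n\subseteq T$ or downstream of $M_n$, or (ii) lies in a non-source SCC of $G'$, in which case the DAG structure of the condensation places it downstream of some source SCC and hence of the corresponding $M_n$. Since paths in the expanded network through composite nodes translate directly to regulatory paths in the original Boolean network, reachability in $G'$ is precisely the "downstream" notion demanded by (ii). The main obstacle I anticipate is the literal-level argument invoking Property~3 — ruling out both foreign and negated literals in $C_{m_j}$ — together with the bookkeeping involved in shrinking a source SCC of $G'$ to a genuinely inclusion-minimal stable motif without losing the coverage guarantee for normal nodes of $M_{state}$; the remainder is a standard exercise in condensation of strongly connected components.
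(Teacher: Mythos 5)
Your proposal is correct and follows essentially the same route as the paper's own argument: change of variables to make all stabilized states equal to $1$, extraction via Property~3 of a conjunctive clause supported entirely on $M_{state}$, construction of the subgraph of normal and composite nodes in which every vertex has an input, and identification of the source SCCs of its condensation as the stable motifs, with the remaining nodes downstream by the DAG structure. The only addition is your explicit descent to an inclusion-minimal SCC and the in-degree argument showing source SCCs are non-trivial, both of which the paper leaves implicit.
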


Part of the proof for this proposition is very similar to the one of Proposition \ref{PartFixedPoints}. First, one does the same change of variables and writes down the Boolean function of an arbitrary element of the nodes whose state is specified in $M_{state}$. Then, from the form of the Boolean function and since $f_{m_j}(\Sigma_M)=1$, at least one of the conjunctive clauses of this Boolean function will be composed only of a normal node of the nodes whose state is specified in $M_{state}$, or composite nodes composed of these normal nodes. Finally, if one creates the network composed only of these normal nodes and composite nodes, and separates them into SCCs, one will find a set of source SCCs. Since a source SCC contains all of its inputs (by definition), and the elements of these SCCs contain only normal nodes and composite nodes composed of these normal nodes (by construction), then these source SCCs will be the stable motifs we are looking for.

For the next propositions we need to remember certain properties of the attractors of the asynchronous updating scheme. For any attractor $\mathcal{A}$, we can divide the $N$ nodes into two classes: those that take the same value in all network states of $\mathcal{A}$ (i.e, either 0 or 1), and those that take more than one value in the different network states of $\mathcal{A}$ (i.e, both 0 and 1). We refer to the former as \emph{stabilized nodes}, and to the latter as \emph{oscillating nodes}. The following propositions state that stabilized nodes can have inputs from stabilized nodes or oscillating nodes (Proposition \ref{Stablemotifs-inputs}), while oscillating nodes must have at least one oscillating node as an input (Proposition \ref{Oscmotifs-inputs}).

\begin{myprop} \label{Stablemotifs-inputs}
Let $\mathcal{A}$ be an attractor of the Boolean network $(V,\Sigma,F)$, and let $\mathcal{S}$ and $\mathcal{O}$ be the set of the stabilized and oscillating nodes in the attractor, respectively. If $v_{s} \in \mathcal{S}$, and $b_s$ is the stabilized state of node $v_{s}$, then one of the following two cases holds: (i) one of the conjunctive clauses of $f_s$ (if $b_s=1$) or $\overline{f}_s$ (if $b_s=0$) depends only on the specific state of the nodes of $\mathcal{S}$ in $\mathcal{A}$. If (i) is not true, then (ii) for both $f_s$ and $\overline{f}_s$ at least one conjunctive clause depends on the state of one or more nodes in $\mathcal{O}$ and, if the clause depends on any more states, they have to be the state of the nodes of $\mathcal{S}$ in $\mathcal{A}$.
\end{myprop}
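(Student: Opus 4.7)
The plan is to apply property 3 of the Boolean functions (that a true assignment of $f_i$ is witnessed by a true conjunctive clause in its DNF) to the function that equals $1$ on $\mathcal{A}$, and then to use a contrapositive/restriction argument to get the corresponding statement for the function that equals $0$ on $\mathcal{A}$. Without loss of generality I would take $b_s=1$; the $b_s=0$ case is symmetric via swapping $f_s \leftrightarrow \overline{f}_s$. Write $f_s = S_1 \vee S_2 \vee \cdots \vee S_n$ in disjunctive normal form.

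First I would handle the $f_s$-side directly. Since $v_s$ is stabilized at $1$, we have $f_s(\Sigma)=1$ for every $\Sigma\in\mathcal{A}$, so property 3 provides a clause $S_{j(\Sigma)}$ satisfied at $\Sigma$. Because the $\mathcal{S}$-nodes are at their stable values $\beta_{\mathcal{S}}$ in $\Sigma$, any $\mathcal{S}$-literal appearing in $S_{j(\Sigma)}$ is forced to agree with $\beta_{\mathcal{S}}$. If $S_{j(\Sigma)}$ contains no $\mathcal{O}$-literal, it is exactly a clause witnessing (i); otherwise, $S_{j(\Sigma)}$ already has the structure required by (ii) for $f_s$. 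Thus, if (i) fails, the $f_s$-half of (ii) is automatic.

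Next I would handle $\overline{f}_s$. Consider the restriction $h(\sigma_{\mathcal{O}}) := f_s(\beta_{\mathcal{S}},\sigma_{\mathcal{O}})$. If $h\equiv 1$, property 3 applied with fixed subset $\mathcal{S}$ produces a clause of $f_s$ that is true purely on the $\beta_{\mathcal{S}}$-assignment, which contradicts the assumed failure of (i). So there is some $\tilde\sigma_{\mathcal{O}}$ with $h(\tilde\sigma_{\mathcal{O}})=0$, i.e., $\overline{f}_s(\beta_{\mathcal{S}},\tilde\sigma_{\mathcal{O}})=1$. Property 3 applied to $\overline{f}_s$ now yields a clause $T_k$ of $\overline{f}_s$ satisfied at that state, and its $\mathcal{S}$-literals must match $\beta_{\mathcal{S}}$. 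To finish, I would argue by contradiction that $T_k$ must involve at least one $\mathcal{O}$-literal: otherwise $T_k$ would be true at \emph{every} state with the $\mathcal{S}$-nodes at $\beta_{\mathcal{S}}$, in particular at every $\Sigma\in\mathcal{A}$, forcing $\overline{f}_s(\Sigma)=1$ and contradicting $f_s\equiv 1$ on $\mathcal{A}$. Hence $T_k$ has the structure demanded by (ii) for $\overline{f}_s$.

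The main obstacle is exactly the $\overline{f}_s$-half of (ii): because $\overline{f}_s$ vanishes on $\mathcal{A}$, no clause of $\overline{f}_s$ is satisfied on the attractor, so one cannot read off the required clause by simply evaluating on attractor states. The argument has to step outside $\mathcal{A}$, using the restricted function $h$ together with property 3 to locate a state where $\mathcal{S}$ still sits at its stable values but $\overline{f}_s$ equals $1$, and then rule out the purely-$\mathcal{S}$ case by invoking the attractor condition. Property 3 is essential here; without it one could not guarantee that a satisfying assignment of a DNF is witnessed by a clause using only the literals actually set, and the restriction-to-$\mathcal{S}$ step would collapse.
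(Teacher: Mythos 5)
Your argument is correct, and it is worth noting that the paper itself never proves Proposition~\ref{Stablemotifs-inputs}: it is stated and then only \emph{illustrated} on a three-node example, so there is no proof of record to compare against. Your proposal supplies exactly the missing argument, and it correctly isolates the genuinely nontrivial half, namely the $\overline{f}_s$ clause in case (ii). Since $\overline{f}_s$ is identically $0$ on $\mathcal{A}$, no clause of its disjunctive normal form is witnessed on the attractor itself; your device of restricting $f_s$ to the stabilized values $\beta_{\mathcal{S}}$, using property 3 to show that the restricted function cannot be identically $1$ when (i) fails, stepping outside $\mathcal{A}$ to a state where $\overline{f}_s=1$ with $\mathcal{S}$ still at $\beta_{\mathcal{S}}$, and then excluding a purely-$\mathcal{S}$ clause by evaluating it back on $\mathcal{A}$, is sound and is essentially the only way to make this half work. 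The $f_s$ half and the dichotomy between (i) and (ii) follow directly from reading off a satisfied clause at any attractor state, as you do. The one step you assert without justification is that $f_s(\Sigma)=b_s$ for every $\Sigma\in\mathcal{A}$; this deserves a sentence, since it follows from the attractor being closed under single-node asynchronous updates (if $f_s(\Sigma)\neq b_s$ for some $\Sigma\in\mathcal{A}$, updating $v_s$ there would produce a state of $\mathcal{A}$ in which $v_s$ takes the value $1-b_s$, contradicting that $v_s\in\mathcal{S}$). With that line added, your proof is complete and could stand as the proof the paper omits; the same template also yields Proposition~\ref{Oscmotifs-inputs} with minor changes.
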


\begin{myprop} \label{Oscmotifs-inputs}
Let $\mathcal{A}$ be an attractor of the Boolean network $(V,\Sigma,F)$, and let $\mathcal{S}$ and $\mathcal{O}$ be the set of the stabilized and oscillating nodes, respectively. If $v_{o} \in \mathcal{O}$ then (i) neither $f_o$ nor $\overline{f}_o$ can have any conjunctive clauses that depend only on the state of the nodes of $\mathcal{S}$ in $\mathcal{A}$ (i.e, on $\sigma_s$ if $b_s=1$, or $\overline{\sigma}_s$ if $b_s=0$), and (ii) both $f_o$ and $\overline{f}_o$ must have at least one conjunctive clause that depends on the state of one or more nodes in $\mathcal{O}$ and, if this same clause depends on any other states, they must be the states of nodes of $\mathcal{S}$ in $\mathcal{A}$.
\end{myprop}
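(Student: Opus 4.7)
The plan is to argue directly from the dynamics on the attractor that $f_o$ and $\overline{f}_o$ must each take both values $0$ and $1$ across the states of $\mathcal{A}$, and then to push this observation through the disjunctive normal form representation of the Boolean rules, exactly as in Proposition \ref{Stablemotifs-inputs} but with the opposite conclusion. First I would establish the key dynamical fact: because $\mathcal{A}$ is a terminal strongly connected component of the asynchronous state transition graph and $v_o$ takes both values in $\mathcal{A}$, there must exist states $\Sigma_0, \Sigma_1 \in \mathcal{A}$ together with a transition in which updating $v_o$ flips its state. In particular $f_o(\Sigma_0)=1$ and $f_o(\Sigma_1)=0$, so $f_o$ attains both values somewhere in $\mathcal{A}$, and equivalently $\overline{f}_o$ does too.

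Next I would prove part (i) by contradiction. Suppose $f_o$ has a conjunctive clause $S$ whose literals are all of the form $\sigma_s$ with $b_s=1$ or $\overline{\sigma}_s$ with $b_s=0$ for nodes $v_s\in\mathcal{S}$. Since the stabilized nodes never change their values over $\mathcal{A}$, every literal of $S$ is uniformly satisfied at every state of $\mathcal{A}$, so $S\equiv 1$ on $\mathcal{A}$; but $f_o$ is a disjunction that includes $S$, hence $f_o\equiv 1$ on $\mathcal{A}$, contradicting the existence of $\Sigma_1$. The symmetric argument with $\overline{f}_o\equiv 1$ (equivalently $f_o\equiv 0$) on $\mathcal{A}$ rules out such a clause in $\overline{f}_o$ as well.

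For part (ii), I would use that $f_o$ is written in disjunctive normal form and that $f_o(\Sigma_0)=1$: by property 3 of the admissible normal forms, some conjunctive clause $S^\ast$ of $f_o$ evaluates to $1$ at $\Sigma_0$, so every literal of $S^\ast$ is satisfied at $\Sigma_0$. By part (i), $S^\ast$ cannot consist only of stabilized-node literals in their stabilized values, so $S^\ast$ must contain at least one literal $\sigma_{o'}$ or $\overline{\sigma}_{o'}$ with $v_{o'}\in\mathcal{O}$. Any further literal of $S^\ast$ built from a stabilized node $v_s\in\mathcal{S}$ must appear in the form matching the stabilized value $b_s$ (otherwise that literal would be $0$ throughout $\mathcal{A}$, forcing $S^\ast\equiv 0$ and contradicting $S^\ast(\Sigma_0)=1$). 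This yields the claim for $f_o$, and applying the identical argument to $\overline{f}_o$ at $\Sigma_1$ gives it for $\overline{f}_o$.

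The main subtlety I expect is in part (ii): one must rule out the clause $S^\ast$ from containing a stabilized-node literal \emph{opposite} to the node's stabilized value, since such a literal evaluates to $0$ at every state of $\mathcal{A}$. This is exactly what disqualifies the clause in the proof above, and once it is handled the disjunctive normal form structure together with the two dynamical facts $f_o(\Sigma_0)=1$ and $\overline{f}_o(\Sigma_1)=1$ yields the statement essentially mechanically. No appeal to the expanded network or to the attractor-preservation proposition is required; the argument lives entirely inside the Boolean rules and the asynchronous state transition graph restricted to $\mathcal{A}$.
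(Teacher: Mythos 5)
Your proof is correct. Both halves work: the existence of states $\Sigma_0,\Sigma_1\in\mathcal{A}$ with $f_o(\Sigma_0)=1$ and $f_o(\Sigma_1)=0$ follows exactly as you say, because $\mathcal{A}$ is a sink strongly connected component of the asynchronous state transition graph, only single-node updates occur, and $\sigma_o$ can only change when $v_o$ itself is updated; part (i) then follows since a clause made entirely of stabilized-node literals in their stabilized polarity would force $f_o$ (or $\overline{f}_o$) to be identically $1$ on $\mathcal{A}$; and part (ii) follows by picking the satisfied clause at $\Sigma_0$ (respectively $\Sigma_1$) and observing that any stabilized-node literal in it must appear in the polarity matching $b_s$, so the clause must also contain an $\mathcal{O}$-literal by part (i). One small remark: you do not actually need property 3 of the admissible normal forms here, since at a \emph{full} network state the semantics of a disjunction already guarantees that some clause evaluates to $1$; property 3 is only needed for partial states. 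For comparison, the paper states this proposition without proof --- it only follows the statement with a worked three-node example illustrating points (i) and (ii) --- so your argument fills in a step the authors left implicit. Your route (two flip transitions extracted from strong connectivity, then a literal-by-literal reading of the disjunctive normal form) is the natural one and is stylistically consistent with how the paper argues its Propositions 1 and 2 and its Lemmas, which likewise reduce everything to clause-level reasoning after a change of variables.
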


To illustrate the implications of Propositions \ref{Stablemotifs-inputs} and \ref{Oscmotifs-inputs}, consider the three node network with nodes $A$, $B$ and $C$, and the following Boolean functions (and their Boolean negations)
\begin{align*}
    f_A =& \hbox{NOT }A\hbox{ OR }\hbox{NOT }B\hbox{ OR }C, \\
    f_B =& \hbox{NOT }A\hbox{ OR }\hbox{NOT }B\hbox{ OR }C, \\
    f_C =& (A\hbox{ AND }B)\hbox{ OR }C, \\
    \overline{f}_A =& A\hbox{ AND }B\hbox{ AND NOT }C, \\
    \overline{f}_B =& A\hbox{ AND }B\hbox{ AND NOT }C, \\
    \overline{f}_C =& (\hbox{NOT }A\hbox{ AND }\hbox{NOT }C)\hbox{ OR }(\hbox{NOT }B\hbox{ AND }\hbox{NOT }C).
\end{align*}
Note that these Boolean functions satisfy the three properties for a Boolean function in $F$ outlined at the beginning of this Appendix. For this network there is an attractor $\mathcal{A}$ with $A$ and $B$ oscillating and $C=0$. For this attractor the set of stabilized and oscillating nodes is $\mathcal{S}=\{C\}$ and $\mathcal{O}=\{A, B\}$, respectively. Point (i) of Proposition \ref{Stablemotifs-inputs} states that, for the attractor $\mathcal{A}$, one possibility is that a conjunctive clause of $\overline{f}_C$ has only the the specific state of nodes of $\mathcal{S}$ in $\mathcal{A}$ (i.e., $\hbox{NOT }C$, since $C=0$ in $\mathcal{A}$). Since none of the conjunctive clauses of $\overline{f}_C$ satisfies point (i), Proposition \ref{Stablemotifs-inputs} states that $f_C$ and $\overline{f}_C$ must satisfy point (ii). For the case of $f_C$ the clause $A\hbox{ AND }B$ depends on at least a node in $\mathcal{O}$, so it does satisfy point (ii). For the case of $\overline{f}_C$, any of the two clauses ($\hbox{NOT }A\hbox{ AND }\hbox{NOT }C$ or $\hbox{NOT }B\hbox{ AND }\hbox{NOT }C$) are enough to satisfy point (ii) since both clauses depend on at least one node in $\mathcal{O}$ ($A$ and $B$, respectively), and the other states they depend on is the state of $C$ in $\mathcal{A}$.

For the network and attractor used in the previous paragraph, Proposition \ref{Oscmotifs-inputs} states that $f_A$, $f_B$, $\overline{f}_A$, and $\overline{f}_B$ must satisfy two properties. In this network we have $f_A=f_B$, so we only need to consider $f_A$ and $\overline{f}_A$. Property (i) requires that neither $f_A$ nor $\overline{f}_A$ can have a conjunctive clause that contains only the term $\hbox{NOT }C$, which is indeed the case. Property (ii) requires that both $f_A$ and $\overline{f}_A$ must have at least one conjunctive clause with a state of one or more nodes in $\mathcal{O}$, which is the case since $f_A$ has the clauses $\hbox{NOT }B$ and $\hbox{NOT }A$, and $\overline{f}_A$ has the clause $A\hbox{ AND }B\hbox{ AND NOT }C$. For the clause $A\hbox{ AND }B\hbox{ AND NOT }C$, we have that it depends on the state of a node not in $\mathcal{O}$ ($\hbox{NOT }C$), so property (ii) also requires that the state it depends on must be the state of a node of $\mathcal{S}$ in $\mathcal{A}$ (i.e., $\hbox{NOT }C$, since $C=0$ in $\mathcal{A}$), which is the case.

We now proceed to prove the three lemmas that will allow us to show that the reduction method conserves all attractors. In Lemma \ref{Stablemotifs-reduction1} we construct the set of nodes, for an arbitrary attractor, whose state will be identified by our reduction method, $\mathcal{S}_{red} \subset \mathcal{S}$. We also show that there is at least one stable motif composed of the corresponding states in the attractor of the nodes of $\mathcal{S}_{red}$ (as long as $\mathcal{S}_{red}$ is not empty). In Lemma \ref{Stablemotifs-reduction2} we show that the network reduction of these stable motifs can only stabilize nodes in $\mathcal{S}_{red}$.

\begin{mylemma} \label{Stablemotifs-reduction1}
Let $\mathcal{A}$ be an attractor of the Boolean network $(V,\Sigma,F)$, and let $\mathcal{S}$ and $\mathcal{O}$ be the set of the stabilized and oscillating nodes of $\mathcal{A}$, respectively. There exists a set of nodes $\mathcal{S}_{red} \subset \mathcal{S}$ such that in the expanded network representation of $(V,\Sigma,F)$ there will be at least one stable motif composed only of the corresponding states of the nodes of $\mathcal{S}_{red}$ in $\mathcal{A}$, or composite nodes composed of such nodes.
\end{mylemma}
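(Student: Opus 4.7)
My plan is to construct $\mathcal{S}_{red}$ explicitly from the attractor and then invoke Proposition~\ref{Stablemotifs-fixedpoints}. For each $v_s \in \mathcal{S}$ with stabilized value $b_s$, let $\ell_s$ denote the corresponding \emph{attractor literal} in the expanded network: $\ell_s = v_s$ when $b_s = 1$ and $\ell_s = \overline{v}_s$ when $b_s = 0$. Consider the subgraph $H$ of the expanded network induced by the attractor literals $\{\ell_s : v_s \in \mathcal{S}\}$ together with every composite node all of whose inputs are themselves attractor literals; all other expanded-network nodes and edges are discarded.

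Proposition~\ref{Stablemotifs-inputs} then gives a dichotomy. If $v_s$ satisfies case (i), then $f_s$ (or $\overline{f}_s$ when $b_s = 0$) contains a conjunctive clause built solely from attractor literals; by construction of the expanded network this clause witnesses an in-edge of $\ell_s$ in $H$, either a direct edge from an attractor literal or an edge from a composite node whose inputs all lie in $H$. If only case (ii) holds, every conjunctive clause of both $f_s$ and $\overline{f}_s$ mentions an oscillating state, so $\ell_s$ has in-degree zero in $H$. I would then iteratively prune $H$ by repeatedly deleting any attractor literal with in-degree zero together with any composite node that thereby loses an input; call the stable subgraph $H'$ and let $\mathcal{S}_{red}$ consist of those $v_s$ whose $\ell_s$ still survives in $H'$. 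For each such $v_s$, the surviving in-edge witnesses a conjunctive clause of $f_s$ or $\overline{f}_s$ whose literals all correspond to nodes of $\mathcal{S}_{red}$ at their attractor values. Evaluated on those attractor values this clause equals $1$ regardless of what the remaining nodes are doing, forcing $f_s(\Sigma_M) = b_s$ for every network state $\Sigma_M$ that extends the attractor values of $\mathcal{S}_{red}$. This is exactly the hypothesis of Proposition~\ref{Stablemotifs-fixedpoints}, and applying that proposition to $M_{state} = (\sigma_s = b_s : v_s \in \mathcal{S}_{red})$ yields a collection of stable motifs in the expanded network each composed solely of attractor literals of $\mathcal{S}_{red}$ and composite nodes built from them, exactly as the lemma requires.

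The step I expect to require the most care is showing that $\mathcal{S}_{red}$ is non-empty, since without this the application of Proposition~\ref{Stablemotifs-fixedpoints} is vacuous. A short pigeonhole observation handles half of this: if $H'$ retains any literal at all, then every surviving literal has positive in-degree, so backward traversal in the finite digraph $H'$ must eventually revisit a node and therefore produce a feedback cycle that supports the desired stable motif. The substantive content is therefore ruling out the degenerate case in which the iterative pruning strips away every attractor literal. This degenerate case coincides exactly with the \emph{incomplete oscillation} phenomenon of Section~\ref{sec:2.7}, in which a stabilized node exists only because an upstream oscillating component never visits part of its sub-state-space and so is not supported by any stable motif. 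Consequently the lemma should be read under the same caveat the paper already flags as the method's intrinsic limitation; once that caveat is in place, and the three regularity assumptions on $F$ listed at the start of this appendix are invoked, the construction above reduces the statement to a direct combination of Propositions~\ref{Stablemotifs-inputs} and~\ref{Stablemotifs-fixedpoints}.
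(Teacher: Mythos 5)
Your proposal is correct and follows essentially the same route as the paper: the iterative in-degree-zero pruning of the attractor-literal subgraph is exactly the paper's iterative construction $\mathcal{S}_0 \supset \mathcal{S}_1 \supset \cdots \supset \mathcal{S}_{red}$ via Proposition~\ref{Stablemotifs-inputs}, and your final appeal to Proposition~\ref{Stablemotifs-fixedpoints} just repackages the paper's own source-SCC argument on the subgraph of surviving literals and their composite nodes. The non-emptiness caveat you flag is the same one the paper acknowledges immediately before the lemma (``as long as $\mathcal{S}_{red}$ is not empty''), with the empty case handled separately in Lemma~\ref{Oscillatingmotifs-reduction}.
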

\begin{proof}
Without loss of generality we can do a change of variables so that $\sigma_{s}=1$ if $v_s \in \mathcal{S}$. By Proposition \ref{Stablemotifs-inputs}, we can divide $\mathcal{S}$ into the nodes that have a conjunctive clause in their rule that depends only on the specific state of nodes of $\mathcal{S}$ in $\mathcal{A}$, or the nodes that have at least a conjunctive clause in their rule that depends on the state of a node in $\mathcal{O}$. We will refer to the former as $\mathcal{S}_{0}$ and to the latter as $\mathcal{S}_{osc}$.

Let $\mathcal{S}_{1} \subset \mathcal{S}_{0}$  be the nodes that have at least one conjunctive clause in their rules that depends only on the specific state of the nodes of $\mathcal{S}_{0}$ in $\mathcal{A}$ (i.e, on $\sigma_s$, because of the change of variables). Let $\mathcal{S}_{2} \subset \mathcal{S}_{1}$  be the nodes that have at least one conjunctive clause in their rules that depends only on the specific state of nodes of $\mathcal{S}_{1}$ in $\mathcal{A}$ (note they could depend on the states of nodes in $\mathcal{S}_{0} - \mathcal{S}_{1}$). We do this iteratively until $\mathcal{S}_{i_{max}+1}=\mathcal{S}_{i_{max}}$ and denote $\mathcal{S}_{red}=\mathcal{S}_{i_{max}} \subset \mathcal{S}_{0}$. Since $\mathcal{S}_{red}$ was constructed by first removing the nodes that required nodes in $\mathcal{S}_{osc}$ to stabilize, and then removing the ones that depended on the previously reduced nodes, and so on, then $\mathcal{S}_{red}$ corresponds to the set of nodes in $\mathcal{S}_{0}$ that do not depend in any way on nodes of $\mathcal{S}_{osc}$ to stabilize in their state on $\mathcal{A}$.

We can now show that the expanded network representation of the states of the nodes of $\mathcal{S}_{red}$ in $\mathcal{A}$ has at least one stable motif composed only of the corresponding states of $\mathcal{S}$ in $\mathcal{A}$ or composite nodes composed of such nodes. Note that because of our change of variables, we only need to consider normal nodes in $\mathcal{S}$ and not complementary nodes of the nodes in $\mathcal{S}$. We first note that the construction of $\mathcal{S}_{red}$ makes sure that its nodes have at least one conjunctive clause in their rules that depends only on the state of nodes of $\mathcal{S}_{red}$ in $\mathcal{A}$. As a consequence, and since the stabilized state was taken to be 1, the expanded network representation of the state of the nodes of $\mathcal{S}_{red}$ in $\mathcal{A}$ will have an input from either the nodes of the corresponding states of $\mathcal{S}_{red}$ in $\mathcal{A}$, and/or composite nodes composed only of said nodes.

From the expanded network representation we take the nodes of the corresponding states of $\mathcal{S}_{red}$ in $\mathcal{A}$ (normal nodes) and the composite nodes composed only of said nodes, and construct a network $\mathcal{V}$ with these normal nodes, composite nodes, and the edges between them. From the discussion in the previous paragraph, each of the nodes in $\mathcal{V}$ must have, at least, one input node. This means that if we divide $\mathcal{V}$ into SCCs there will be at least one SCC in which the nodes have no inputs outside the SCC itself (a source SCC).  These source SCCs are stable motifs since the composite nodes in each of these SCCs have all their inputs included (the nodes in these SCCs have no inputs outside the SCC itself) and by construction these SCCs contain no complementary nodes.
\end{proof}

\begin{mylemma} \label{Stablemotifs-reduction2}
Let $\mathcal{S}_{red} \subset \mathcal{S}$ be the constructed set of nodes in Lemma \ref{Stablemotifs-reduction1}. Then (i) $\mathcal{S}_{red}$ is such that the network reduction of any stable motif composed only of the corresponding states of $\mathcal{S}_{red}$ in $\mathcal{A}$ (or composite nodes composed of such nodes) can only stabilize nodes in $\mathcal{S}_{red}$, and (ii) if any of the states of the nodes in $\mathcal{S}_{red}$ stabilizes, then it has to be on their corresponding state in $\mathcal{A}$; if they do not stabilize, then either their rule (if their stabilized state in $\mathcal{A}$ is 1) or the negation of their rule (if their stabilized state is 0) will have a conjunctive clause that only depends on the specific state of the nodes of $\mathcal{S}_{red}$ in $\mathcal{A}$ (i.e, on $\sigma_s$ if $b_s=1$, or $\overline{\sigma}_s$ if $b_s=0$) that did not stabilize during network reduction.
\end{mylemma}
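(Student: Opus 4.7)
The plan is to prove (i) together with the first half of (ii) by induction on the sequence of nodes that the reduction stabilizes, starting from a stable motif $M$ whose expanded-network content consists only of the states of $\mathcal{S}_{red}$ nodes in $\mathcal{A}$ together with composite nodes built from them. The combined induction hypothesis is that every node stabilized so far belongs to $\mathcal{S}_{red}$ and has been fixed at its $\mathcal{A}$-value $b_v$. The base case is immediate: Proposition~\ref{PartFixedPoints} guarantees that the nodes of $M$ are stabilized at the motif values, which by hypothesis are exactly the $\mathcal{A}$-values of the corresponding $\mathcal{S}_{red}$ nodes.

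For the inductive step, assume the hypothesis and let $v$ be the next node the reduction stabilizes, at some value $b_v^{\mathrm{red}}\in\{0,1\}$. Being stabilized means that after substituting the currently stabilized inputs into $f_v$ the result is the constant $b_v^{\mathrm{red}}$. By property~3 of the functions in $F$ stated at the beginning of the appendix, this implies that either $f_v$ (if $b_v^{\mathrm{red}}=1$) or $\overline{f}_v$ (if $b_v^{\mathrm{red}}=0$) contains a conjunctive clause $C$ whose variables are all among the currently stabilized nodes and which evaluates to $1$ at their stabilized values. By the induction hypothesis those variables belong to $\mathcal{S}_{red}$ and are set to their $\mathcal{A}$-values, so $C$ also evaluates to $1$ on every state of $\mathcal{A}$; hence $f_v$ takes the constant value $b_v^{\mathrm{red}}$ throughout $\mathcal{A}$, forcing $v\in\mathcal{S}$ with $b_v=b_v^{\mathrm{red}}$, which proves the first half of (ii) for this node. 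Proposition~\ref{Oscmotifs-inputs}(i) independently excludes $v\in\mathcal{O}$. Since $v$ has a conjunctive clause depending only on $\mathcal{S}_{red}\subset\mathcal{S}$ nodes in their $\mathcal{A}$-states, $v\in\mathcal{S}_0$; and because the same clause $C$ depends only on $\mathcal{S}_i$ nodes for every $i$ in the iterative construction (using $\mathcal{S}_{red}\subset\mathcal{S}_i$), it witnesses $v\in\mathcal{S}_{i+1}$ at each step and hence $v\in\mathcal{S}_{red}$, which closes the induction and establishes (i).

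For the second half of (ii), consider $v\in\mathcal{S}_{red}$ that has not stabilized at the end of reduction. By the defining property of $\mathcal{S}_{red}$, either $f_v$ (if $b_v=1$) or $\overline{f}_v$ (if $b_v=0$) contains a conjunctive clause $C^{\ast}$ whose variables are $\mathcal{S}_{red}$ nodes appearing in their $\mathcal{A}$-states. If every node in $C^{\ast}$ had been stabilized during reduction, then by the already-proved first half of (ii) each would be fixed at its $\mathcal{A}$-value, so $C^{\ast}$ would evaluate to $1$ under substitution and the reduced rule of $v$ would be identically $1$, stabilizing $v$ and contradicting the assumption. Hence some variables of $C^{\ast}$ correspond to unstabilized $\mathcal{S}_{red}$ nodes, and after substituting the stabilized ones the surviving sub-conjunction of $C^{\ast}$ is a clause of the reduced rule depending only on unstabilized $\mathcal{S}_{red}$ nodes in their $\mathcal{A}$-states, which is the required witness.

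The main obstacle I expect is bookkeeping between three distinct notions of conjunctive clause that the statement mixes: clauses of the DNF of $f_v$, clauses of the DNF of $\overline{f}_v$ (which must be tracked separately depending on the value of $b_v$), and the shortened clauses of the reduced rule obtained after substitution of stabilized values. Property~3 of the functions in $F$ does the essential work by guaranteeing that whenever the reduced function becomes constant a single original DNF clause is responsible, which is what lets the inductive step close without a case analysis on the remaining clauses; without property~3 the implication that $f_v$ becoming constant after substitution forces some clause to depend only on substituted variables would fail and the induction would break. Once those notational distinctions are pinned down, the rest of the argument is a direct combination of Propositions~\ref{PartFixedPoints} and \ref{Oscmotifs-inputs} with the nested chain $\mathcal{S}_{red}\subset\cdots\subset\mathcal{S}_1\subset\mathcal{S}_0\subset\mathcal{S}$ built into the construction of $\mathcal{S}_{red}$ in Lemma~\ref{Stablemotifs-reduction1}.
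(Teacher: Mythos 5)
Your proof is correct, and it reorganizes the argument in a way that differs noticeably from the paper's. The paper proves point (i) by direct exclusion: it partitions the nodes outside $\mathcal{S}_{red}$ into $\mathcal{S}_{0}-\mathcal{S}_{red}$, $\mathcal{S}_{osc}$ (and, via Proposition \ref{Oscmotifs-inputs}, the oscillating nodes), and shows for each class that substituting the states of $\mathcal{S}_{red}$ cannot force their rules to either constant; it then gives a separate iterative substitution argument for point (ii), confined to the nodes of $\mathcal{S}_{red}$. You instead run a single forward induction over the whole reduction sequence, proving the contrapositive of (i) simultaneously with the first half of (ii): property 3 of the functions extracts, for each newly stabilized node, a witnessing conjunctive clause over already-stabilized nodes, Proposition \ref{Oscmotifs-inputs} rules out membership in $\mathcal{O}$, and the fixed-point property $\mathcal{S}_{i_{\max}+1}=\mathcal{S}_{i_{\max}}$ of the construction in Lemma \ref{Stablemotifs-reduction1} pulls the node into $\mathcal{S}_{red}$. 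The ingredients are identical, but your organization has a genuine advantage: the paper's point (i) argument only explicitly considers substituting the states of the original stable motif (or of $\mathcal{S}_{red}$), and tacitly relies on point (ii) to guarantee that every later substitution is again an $\mathcal{S}_{red}$ node at its $\mathcal{A}$-value; your unified induction makes that interleaving explicit and closes it cleanly. What the paper's version buys in exchange is a more transparent accounting of \emph{why} each excluded class fails to stabilize (in particular, that nodes in $\mathcal{S}_{0}-\mathcal{S}_{red}$ cannot be forced to $0$ either, which uses Proposition \ref{Stablemotifs-inputs}), information your contrapositive route never needs to surface. The only point you leave implicit is that $f_v$ being constantly equal to $b_v^{\mathrm{red}}$ on all states of $\mathcal{A}$ forces $v$ to be stabilized at that value in $\mathcal{A}$; this follows from $\mathcal{A}$ being closed under asynchronous updates and strongly connected as a sink component of the state transition graph, and is the same closure property the paper uses throughout, so it is not a gap.
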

\begin{proof}
Without loss of generality we will again do a change of variables so that $\sigma_{s}=1$ if $v_s \in \mathcal{S}$. Let us start with the nodes in $\mathcal{S}_{0} - \mathcal{S}_{red}$ (the stabilized nodes not in $\mathcal{S}_{red}$), as constructed in Lemma \ref{Stablemotifs-reduction1}. These nodes cannot have a conjunctive clause in their rules that depend only on the specific state of the nodes of $\mathcal{S}_{red}$ in $\mathcal{A}$, since that would make them part of $\mathcal{S}_{red}$. This means that setting the states of the nodes of $\mathcal{S}_{red}$ in $\mathcal{A}$ in their rules will not set the value of the rule to 1. It can also not make it be equal to 0, since each of them has a conjunctive clause in their rule that can be equal to 1 if the state of the rest nodes they have as an input (which have not been set to any value) take any of the states in $\mathcal{A}$. Hence, the nodes $\mathcal{S}_{0} - \mathcal{S}_{red}$ will not be stabilized by the reduction of any of the stable motifs being considered.

Let us now look at what happens to the nodes in $\mathcal{S}_{osc}$ during the reduction of the stable motifs of interest. In Proposition \ref{Oscmotifs-inputs} we showed that for every node $v_o \in \mathcal{S}_{osc}$ neither $f_o$ nor $\overline{f}_o$ could have any of their conjunctive clauses depend only on the specific states of the nodes of $\mathcal{S}$ in $\mathcal{A}$. As a consequence, none of the nodes in $\mathcal{S}_{osc}$ will be stabilized by the reduction of stable motifs  made up only of nodes in $\mathcal{S}_{red} \subset \mathcal{S}$. Since we have now shown than neither the nodes in $\mathcal{S}_{0} - \mathcal{S}_{red}$ nor the nodes in $\mathcal{S}_{osc}$ can be stabilized by the reduction of the nodes in $\mathcal{S}_{red}$ in their state in $\mathcal{A}$, then point (i) of the lemma has been proved.

To show point (ii), let us consider the iterative process involved in network reduction. First, one sets the states of the nodes in the chosen source SCC in the rules of all nodes in $\mathcal{S}_{red}$. As a consequence, the nodes in $\mathcal{S}_{red}$ that have any conjunctive clause in their rule depending only on the specific state of the nodes in the source SCC will stabilize in the 1 state. The rest of the nodes in $\mathcal{S}_{red}$ cannot stabilize on either 1 (or they would be part of the previous nodes) nor on 0 (since they still have an conjunctive clause in their rule that depends on specific the state of the nodes of $\mathcal{S}_{red}$ in $\mathcal{A}$ not in the source SCC). If one now evaluates the states of the nodes that just stabilized on 1 in the rest of the nodes in $\mathcal{S}_{red}$, and follows the same arguments, the result will be a new set of nodes that just stabilized on 1, and a set of nodes that have, at least, one conjunctive clause in their rule only depending only on the specific state of the nodes of $\mathcal{S}_{red}$ in $\mathcal{A}$ that have not stabilized. Doing this iteratively until no more nodes stabilize one finds the desired result, that is, that the nodes in $\mathcal{S}_{red}$ either stabilize at their state in $\mathcal{A}$ or if they do not, then they still have, at least, one conjunctive clause that now depends only on the specific state of the nodes of $\mathcal{S}_{red}$ in $\mathcal{A}$ that this reduction did not stabilize.
\end{proof}

\begin{mylemma} \label{Oscillatingmotifs-reduction}
Let $\mathcal{A}$ be an attractor of the Boolean network $(V,\Sigma,F)$, and let $\mathcal{S}$ and $\mathcal{O}$ be the set of the stabilized and oscillating nodes, respectively. Let $\mathcal{S}_{red} \subset \mathcal{S}$ be the constructed set of nodes in Lemma \ref{Stablemotifs-reduction1} and assume that $\mathcal{S}_{red}$ is empty and that $\mathcal{O}$ is a non empty set. Then the expanded network representation of $(V,\Sigma,F)$ must be such that the normal nodes and complementary nodes of the elements in $\mathcal{O}$, and the nodes corresponding to the state of the nodes of $\mathcal{S}$ in $\mathcal{A}$ must both be downstream of an oscillating motif that contains at least one of the nodes in $\mathcal{O}$.
\end{mylemma}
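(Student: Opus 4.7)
The plan is to construct an auxiliary subgraph $H$ of the expanded network, locate a source strongly connected component of $H$, and argue that it must play the role of the oscillating motif. After a change of variables that makes $\sigma_s = 1$ on $\mathcal{A}$ for every $s \in \mathcal{S}$ (analogous to the change used in Lemmas \ref{Stablemotifs-reduction1} and \ref{Stablemotifs-reduction2}), I would take $H$ to consist of the normal nodes $v_s$ for $s \in \mathcal{S}$, both $v_o$ and $\overline{v}_o$ for each $o \in \mathcal{O}$, and every composite node whose complete input set already lies in those two families. Using Proposition \ref{Stablemotifs-inputs} for the $v_s$ nodes and Proposition \ref{Oscmotifs-inputs} for the $v_o$ and $\overline{v}_o$ nodes, one checks that each node of $H$ has at least one incoming edge from within $H$: each such node admits a conjunctive clause whose literals are all available in $H$ (for the $v_s$, by either case of Proposition \ref{Stablemotifs-inputs}; for $v_o$ and $\overline{v}_o$, by case (ii) of Proposition \ref{Oscmotifs-inputs} applied to $f_o$ and to $\overline{f}_o$ respectively).

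Because every node of $H$ has an $H$-incoming edge and $H$ is finite, the SCC condensation of $H$ admits at least one source SCC $C$ (an SCC with no incoming edges from $H\setminus C$). Since $C$ is closed under $H$-incoming edges and since every composite node of $H$ has, by construction, all of its inputs in $H$, every composite node of $C$ automatically has all its inputs in $C$, so $C$ already satisfies property (2) of the motif definitions. Moreover, every node of $H\setminus C$ either sits in another source SCC of $H$ or is reachable by a directed path from some source SCC, which is exactly the downstream property that the lemma claims for both the $\mathcal{O}$-pair nodes and the $v_s$ nodes.

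The decisive step is to show that some such source SCC $C$ also satisfies property (1), namely that for some $o \in \mathcal{O}$ both $v_o$ and $\overline{v}_o$ belong to $C$. I would argue by contradiction: if $C$ contained no such pair, then sending every normal node of $C$ to $1$ and the original of every complementary node of $C$ to $0$ would, by Proposition \ref{Stablemotifs-fixedpoints}, define a partial fixed point $\alpha$ of $F$ on the nodes of $C$. If $C$ involved only $\mathcal{S}$-nodes, the iteration defining $\mathcal{S}_{red}$ in the proof of Lemma \ref{Stablemotifs-reduction1} would place all of its nodes in $\mathcal{S}_{red}$, contradicting the hypothesis $\mathcal{S}_{red} = \emptyset$. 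If instead $C$ involved at least one $o \in \mathcal{O}$, one would combine the partial-fixed-point property of $\alpha$ with the strong connectivity of $\mathcal{A}$ in the asynchronous state-transition graph: once the $C$-nodes are simultaneously aligned with $\alpha$, the update rules lock the $o$'s in $C$ at their $\alpha$-values forever, contradicting the fact that every $o \in C \cap \mathcal{O}$ takes both values in $\mathcal{A}$.

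The main obstacle is precisely that last reachability step. As Section \ref{sec:2.7} warns, an asynchronous complex attractor need not visit every consistent joint configuration of its oscillating nodes, so the simultaneous $\alpha$-alignment of the $\mathcal{O}$-entries of $C$ inside $\mathcal{A}$ has to be argued rather than assumed. I would try to produce the aligned state by exploiting the fact that, within $C$, each $v_o$ and each $\overline{v}_o$ is driven by a composite node whose inputs lie entirely in $C$: this lets one hold the already-aligned $C$-entries fixed while pushing the remaining $\mathcal{O}$-entries of $C$ toward their $\alpha$-values one at a time, following update schedules that stay inside $\mathcal{A}$ by the strong connectivity of $\mathcal{A}$. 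Handling the interplay between composite nodes in $C$ that mix $\mathcal{O}$- and $\mathcal{S}$-literals, and ensuring that the putative driving schedule does not accidentally leave $\mathcal{A}$, is where I expect the bulk of the technical effort to go.
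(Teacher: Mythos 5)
Your first stage is essentially right and close in spirit to the paper's: every relevant node receives an edge from within the relevant node set, so a source SCC exists and everything else is downstream of one, and the case of a source SCC consisting only of $\mathcal{S}$-nodes is excluded by the emptiness of $\mathcal{S}_{red}$. (The paper runs this argument in the full expanded network rather than in your restricted subgraph $H$; this matters, because oscillating motifs are defined as SCCs of the expanded network and a source SCC of $H$ need not be one, and because $H$ omits the complements $\overline{v}_s$, so a source SCC of $H$ that happens to contain some $v_s$ could never satisfy motif property (1).) The genuine gap is the one you flag yourself: to derive a contradiction when $C$ meets $\mathcal{O}$ but contains no complementary pair, you need $\mathcal{A}$ to visit a state in which all of $C$ is simultaneously aligned with the partial fixed point $\alpha$, and asynchronous complex attractors need not visit any such state --- this is exactly the ``incomplete oscillation'' phenomenon of section \ref{sec:2.7} (Figure \ref{fig:RNoscillations}), where the oscillating pair never realizes the state $A=B=1$. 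Strong connectivity of $\mathcal{A}$ in the state transition graph does not let you drive the unaligned entries toward $\alpha$ one at a time while freezing the aligned ones: the update rules can force an aligned node to flip first, and the schedule you need may simply not exist inside $\mathcal{A}$. So the decisive step does not close.

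The paper avoids this reachability question entirely by a structural symmetry argument. Property 3 of the Boolean functions implies that if $s_k$ appears as a literal in the disjunctive normal form of $f$, then $\overline{s}_k$ appears as a literal in that of $\overline{f}$; hence the set of complements of the non-composite nodes of a source SCC again spans a source SCC of the expanded network. A node of $\mathcal{O}$ and its complement must lie in mutually connected SCCs for the node to be able to oscillate at all, and two source SCCs that are connected must coincide; therefore every source SCC containing some $v_o$ also contains $\overline{v}_o$ and, more generally, the complement of every element it contains, which is precisely oscillating-motif property (1). If you want to repair your proof, this clause-negation pairing is the missing ingredient: it converts your hard dynamical claim into a purely graph-theoretic one.
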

\begin{proof}
We first show that the expanded network has, at least, one source SCC which has either a normal node or complementary node of a node in $\mathcal{O}$. First, since the Boolean network has no source nodes, then the expanded network representation will also have no source nodes. We can then divide the expanded network into SCCs, with at least one of these being a source SCC. Because of the absence of source nodes, all nodes in the expanded network will be downstream of one of these source SCCs. Let us assume that one of these source SCCs has neither a normal node nor a complementary node of the elements in $\mathcal{O}$. This means that all the nodes in this source SCC are either normal nodes or complementary nodes of the elements in $\mathcal{S}$. But, since we assumed $\mathcal{S}_{red}$ is empty, the construction of $\mathcal{S}_{red}$ in Lemma \ref{Stablemotifs-reduction1} guarantees that all nodes in $\mathcal{S}$ are downstream of a normal node or complementary node of the elements in $\mathcal{O}$. Therefore, this source SCC must contain a normal or complementary node of the elements in $\mathcal{O}$, which is a contradiction. Therefore, all source SCCs contain either a normal node or a complementary node of the elements in $\mathcal{O}$. As a consequence, all nodes in the expanded network are downstream of these source SCCs.

Before proceeding, we first need to note a certain property of the Boolean function with the properties we specified at the beginning of the section. Without loss of generality, $f$ can be written in the form $f=S_1\hbox{ OR }S_2\hbox{ OR }\cdots\hbox{ OR }S_n$, where $S_i=s_1\hbox{ AND }s_2\hbox{ AND }\cdots\hbox{ AND }s_I$, and where each $s_j$ is either a node state or the negation of a node state. Now, lets assume that $s_k$ is one of the $s_j$'s of $f$, then $\overline{f}$ will have one of its corresponding $s_j$'s be the negation of $s_k$. This can be proved by using property 3 of the Boolean functions.

Let $V_O$ be any of the source SCCs that contains a normal or complementary node of the elements in $\mathcal{O}$. Because of the property discussed in the previous paragraph, the complement of the normal nodes or complementary nodes of $V_O$ will also form a source SCC. These two SCCs have to be connected to each other, or these nodes would not be able to oscillate in the attractor $\mathcal{A}$. Since they are source SCCs, this means the are actually part of the same source SCC. This shows that $V_O$ contains both the nodes and complementary nodes of every element it contains. Finally, since these source SCCs were constructed by separating the whole expanded network into SCCs, then they are the largest SCC (i.e. the usual definition of SCC). Hence, we have shown that these source SCCs are oscillating motifs.
\end{proof}

The following theorem is the main result of this section, and it combines the results of Lemma \ref{Stablemotifs-reduction1}, \ref{Stablemotifs-reduction2}, and \ref{Oscillatingmotifs-reduction}. It shows that for every attractor in the network our reduction method will find a corresponding quasi-attractor in which the state of the nodes in $\mathcal{S}_{red}$ is the same as in the attractor, and in which the rest of the nodes will either be part of an oscillating motif or downstream of it.

\begin{mythm} \label{Attractorconservation}
Let $\mathcal{A}$ be an attractor of the Boolean network $(V,\Sigma,F)$, and let $\mathcal{S}$ and $\mathcal{O}$ be the set of the stabilized and oscillating nodes, respectively. Let $\mathcal{S}_{red} \subset \mathcal{S}$ be the set of nodes constructed in Lemma \ref{Stablemotifs-reduction1}. Then, there exists a set of stable motifs such that, by applying network reduction, all the nodes in $\mathcal{S}_{red}$ will stabilize in their steady state in $\mathcal{A}$, while the rest of the nodes in $V$ will be part of the final reduced network. This resulting final reduced network will be such that, in its expanded network representation, all the nodes will either be part of an oscillating motif containing at least one of the nodes in $\mathcal{O}$, or be downstream of an oscillating motif.
\end{mythm}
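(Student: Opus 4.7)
I would prove the theorem by a finite induction that builds the claimed sequence of stable motifs one at a time, using Lemmas~\ref{Stablemotifs-reduction1} and \ref{Stablemotifs-reduction2} at each step, and then invoking Lemma~\ref{Oscillatingmotifs-reduction} on the terminal reduced network. Two degenerate cases are handled separately: if $\mathcal{O}=\emptyset$ then $\mathcal{A}$ is a fixed point, so $\mathcal{S}_{red}=\mathcal{S}=V$ and the reduction simply stabilizes every node; if $\mathcal{S}_{red}=\emptyset$ while $\mathcal{O}$ is non-empty, no reduction is needed and Lemma~\ref{Oscillatingmotifs-reduction} applies to $(V,\Sigma,F)$ directly.

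For the generic case, the induction step proceeds as follows. Given the current (possibly reduced) network whose set $\mathcal{S}_{red}^{(i)}$ from the construction in Lemma~\ref{Stablemotifs-reduction1} is non-empty, Lemma~\ref{Stablemotifs-reduction1} supplies at least one stable motif $M_i$ built from the $\mathcal{A}$-states of nodes of $\mathcal{S}_{red}^{(i)}$ (and composite nodes over them). By Lemma~\ref{Stablemotifs-reduction2}, reducing $M_i$ stabilizes a non-empty subset $\mathcal{S}_{\mathrm{stab},i}\subseteq \mathcal{S}_{red}^{(i)}$ of nodes at exactly their $\mathcal{A}$-values, and the remaining nodes of $\mathcal{S}_{red}^{(i)}$ retain, in the simplified rules, at least one conjunctive clause depending only on the $\mathcal{A}$-values of still-unstabilized $\mathcal{S}_{red}^{(i)}$ nodes. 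Since $\mathcal{S}_{red}$ is finite and strictly shrinks at each iteration, the procedure terminates after finitely many steps $k$ with $\mathcal{S}_{red}^{(k)}=\emptyset$.

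The main obstacle is to verify that $\bigcup_{i=1}^{k} \mathcal{S}_{\mathrm{stab},i}$ equals $\mathcal{S}_{red}$ exactly: the procedure must never stabilize a node outside $\mathcal{S}_{red}$, and it must eventually stabilize every node of $\mathcal{S}_{red}$. For the ``never outside'' direction I would lean on part (i) of Lemma~\ref{Stablemotifs-reduction2} together with Proposition~\ref{Oscmotifs-inputs}: substituting $\mathcal{A}$-consistent values for $\mathcal{S}_{\mathrm{stab},i}$ cannot turn any oscillating node's rule into a constant, because Proposition~\ref{Oscmotifs-inputs}(i) forbids any conjunctive clause of an oscillating rule from depending only on $\mathcal{S}$-in-$\mathcal{A}$ states, so substitution always leaves an $\mathcal{O}$-dependent literal in every surviving clause; an analogous bookkeeping argument rules out nodes of $\mathcal{S}_{0}\setminus \mathcal{S}_{red}$ being stabilized, since their $\mathcal{S}$-only clauses are guaranteed to carry a literal from $\mathcal{S}_{osc}$, and $\mathcal{S}_{osc}$ is left intact by the reduction. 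For the ``eventually all of $\mathcal{S}_{red}$'' direction, part (ii) of Lemma~\ref{Stablemotifs-reduction2} guarantees that unstabilized $\mathcal{S}_{red}^{(i)}$ nodes keep a clause depending only on $\mathcal{A}$-values of other unstabilized $\mathcal{S}_{red}^{(i)}$ nodes, which by the construction in the proof of Lemma~\ref{Stablemotifs-reduction1} forces a source SCC---hence a fresh stable motif---to exist at step $i+1$.

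Once the induction terminates, the final reduced network is defined on $V\setminus \mathcal{S}_{red}$, with oscillating set still $\mathcal{O}$ (non-empty in the generic case) and its own $\mathcal{S}_{red}$-analogue empty by construction. Invoking Lemma~\ref{Oscillatingmotifs-reduction} on this network then gives the desired conclusion: every node of the expanded representation of the final reduced network is either a member of, or lies downstream of, an oscillating motif containing at least one node of $\mathcal{O}$, completing the proof.
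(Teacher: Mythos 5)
Your proposal is correct and follows essentially the same route as the paper: iterate Lemma~\ref{Stablemotifs-reduction1} (existence of a stable motif over the current $\mathcal{S}_{red}$) and Lemma~\ref{Stablemotifs-reduction2} (reduction stabilizes only $\mathcal{S}_{red}$ nodes, at their $\mathcal{A}$-values, leaving a strictly smaller $\mathcal{S}_{red}$) until $\mathcal{S}_{red}$ is exhausted, then invoke Lemma~\ref{Oscillatingmotifs-reduction} on the terminal reduced network. Your version is more explicit about the degenerate cases, termination, and the ``exactly $\mathcal{S}_{red}$'' bookkeeping, but the argument is the same one the paper gives in compressed form.
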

\begin{proof}
Using Lemma \ref{Stablemotifs-reduction2}, the network obtained after reducing any stable motif composed only of the corresponding states of $\mathcal{S}_{red}$ in $\mathcal{A}$ will have a new $\mathcal{S}_{red}$ containing only the nodes in the previous $\mathcal{S}_{red}$ that did not stabilize. If one performs network reduction using the stable motif that necessarily exists in the new network (because of Lemma \ref{Stablemotifs-reduction1}), and does this iteratively, one will obtain a network where $\mathcal{S}_{red}$ is empty and where only the states of the nodes in the original $\mathcal{S}_{red}$ stabilized during reduction in their state in $\mathcal{A}$. By the results in Lemma \ref{Oscillatingmotifs-reduction}, this resulting network has a set of oscillating motifs with at least one of the nodes in $\mathcal{O}$, and with the rest of the nodes downstream of these oscillating motifs.
\end{proof}

\subsection*{Appendix B - The full network reduction algorithm} \label{AppendixAlgorithm}

In the following we describe the full network reduction algorithm. During the description of the algorithm we refer the reader to the subsections in section \ref{sec:2} where each of these steps are described in more detail. A Java implementation of the network reduction algorithm is available per request to the interested reader.

\begin{enumerate}
\item For every combination of the states of the source nodes (nodes with no upstream components) apply the two steps of network reduction method described in section \ref{sec:2.6} recursively until neither of them can be applied anymore.
\item Create the expanded network representation for each of the resulting networks (section \ref{sec:2.4}).
\item Search the expanded network for stable motifs (section \ref{sec:2.5}) and oscillating components (section \ref{sec:2.7}).
\item For every separate stable motif create a copy of the current network. On each of the networks created use the states of the corresponding stable motif as inputs and apply the two steps of the network reduction described in section \ref{sec:2.6} recursively until neither of them can be applied anymore.
\item For every oscillating component of more than two nodes (i.e., every oscillating component that could display incomplete oscillations) create a copy of the current network. On each of the networks created, the nodes in the corresponding oscillating component and the nodes downstream of this component will be marked. The marked nodes cannot be reduced at any later step of the algorithm (i.e, they will have their state undetermined in the quasi-attractors that are derived from these networks).
\item For the oscillating components of two nodes (i.e, only one normal node and its corresponding complementary node), check if any node downstream of these oscillating motifs participates in a stable motif with no composite nodes. If any of them do, go to step 7; otherwise, check if there are any stable motifs that are downstream of these oscillating components (these stable motifs would necessarily have a composite node). If there are not, go to step 7; if there are, check if any of them is downstream of a stable motif that is itself not downstream of any of these oscillating components. If this is the case, go to step 7; if this is not the case, then create one copy of the current network and mark the nodes in the oscillating motifs considered in this step and the nodes downstream of them. The marked nodes cannot be reduced at any later step of the algorithm (i.e, they will have their state undetermined in the quasi-attractors that are derived from these networks).
\item Repeat 2, 3, 4, 5 and 6 for each of the networks iteratively until no more stable motifs are found. The result, a set of fixed state nodes and their stabilized states, and a set of nodes with undetermined states with their reduced Boolean functions, is the set of quasi-attractors (section \ref{sec:2.6}).
\item Prune the set of quasi-attractors of duplicates (two quasi-attractors are the same if they have the same set of fixed state nodes and the same state for these stabilized nodes; if two quasi-attractors are the same, except that one of them has some nodes marked while the other one does not, remove the one that has the marked nodes).
\end{enumerate}

Some of the resulting quasi-attractors will have marked nodes while others will not. For every unmarked quasi-attractor there will necessarily be a corresponding attractor in the Boolean network. For a marked quasi-attractor there may not be a corresponding attractor in the Boolean network; only by knowing the specific states visited during oscillations by the undetermined nodes in the quasi-attractor's reduced network can one confirm whether there is a corresponding attractor (this is a consequence of incomplete oscillations and unstable oscillations, see section \ref{sec:2.7}).

\subsection*{Appendix C - The logical rules of the T-LGL leukemia survival network} \label{AppendixC}

Logical rules governing the state of the T-LGL leukemia survival signaling network depicted in Figure \ref{fig:TLGLnetwork}. For simplicity, the nodes' states are represented by the node names. The Boolean rules were constructed based on experimental results of the corresponding cellular elements in healthy and leukemic cytotoxic T cells, in such a way that that the model reproduces the result of knockout and overexpression experiments. The 'NOT Apoptosis' clause in each rule implements the fact that in the apoptosis (cell death) state every node except Apoptosis is OFF. This table is adapted from \cite{AssiehPCB,TLGLPNAS}. The interested reader is referred to \cite{TLGLPNAS} for the detailed explanation of the rules.
\ \\
\ \\
$f_{CTLA4}$ = TCR AND NOT Apoptosis \\
$f_{TCR}$ = Stimuli AND NOT (CTLA4 OR Apoptosis) \\
$f_{PDGFR}$ = (S1P OR PDGF) AND NOT Apoptosis \\
$f_{FYN}$ = (TCR OR IL2RB) AND NOT Apoptosis \\
$f_{Cytoskeleton\hbox{ }signaling}$ = FYN AND NOT Apoptosis \\
$f_{LCK}$ = (CD45 OR ((TCR OR IL2RB) AND NOT ZAP70)) AND NOT Apoptosis \\
$f_{ZAP70}$ = LCK AND NOT (FYN OR Apoptosis) \\
$f_{GRB2}$ = (IL2RB OR ZAP70) AND NOT Apoptosis \\
$f_{PLCG1}$ = (GRB2 OR PDGFR) AND NOT Apoptosis \\
$f_{RAS}$ = (GRB2 OR PLCG1) AND NOT (GAP OR Apoptosis) \\
$f_{GAP}$ = (RAS OR (PDGFR AND GAP)) AND NOT (IL15 OR IL2 OR Apoptosis) \\
$f_{MEK}$ = RAS AND NOT Apoptosis \\
$f_{ERK}$ = (MEK AND PI3K) AND NOT Apoptosis \\
$f_{PI3K}$ = (PDGFR OR RAS) AND NOT Apoptosis \\
$f_{NFKB}$ = ((TPL2 OR PI3K) OR (FLIP AND TRADD AND IAP)) AND NOT Apoptosis \\
$f_{NFAT}$ = PI3K AND NOT Apoptosis \\
$f_{RANTES}$ = NFKB AND NOT Apoptosis \\
$f_{IL2}$ = (NFKB OR STAT3 OR NFAT) AND NOT (TBET OR Apoptosis) \\
$f_{IL2RBT}$ = (ERK AND TBET) AND NOT Apoptosis \\
$f_{IL2RB}$ = (IL2RBT AND (IL2 OR IL15)) AND NOT Apoptosis \\
$f_{IL2RAT}$ = (IL2 AND (STAT3 OR NFKB)) AND NOT Apoptosis \\
$f_{IL2RA}$ = (IL2 AND IL2RAT) AND NOT (IL2RA OR Apoptosis) \\
$f_{JAK}$ = (IL2RA OR IL2RB OR RANTES OR IFNG) AND NOT (SOCS OR CD45 OR Apoptosis) \\
$f_{SOCS}$ = JAK AND NOT (IL2 OR IL15 OR Apoptosis) \\
$f_{STAT3}$ = JAK AND NOT Apoptosis \\
$f_{P27}$ = STAT3 AND NOT Apoptosis \\
$f_{Proliferation}$ = STAT3 AND NOT (P27 OR Apoptosis) \\
$f_{TBET}$ = (JAK OR TBET) AND NOT Apoptosis \\
$f_{CREB}$ = (ERK AND IFNG) AND NOT Apoptosis \\
$f_{IFNGT}$ = (TBET OR STAT3 OR NFAT) AND NOT Apoptosis \\
$f_{IFNG}$ = ((IL2 OR IL15 OR Stimuli) AND IFNGT) AND NOT (SMAD OR P2 OR Apoptosis) \\
$f_{P2}$ = (IFNG OR P2) AND NOT (Stimuli2 OR Apoptosis) \\
$f_{GZMB}$ = ((CREB AND IFNG) OR TBET) AND NOT Apoptosis \\
$f_{TPL2}$ = (TAX OR (PI3K AND TNF)) AND NOT Apoptosis \\
$f_{TNF}$ = NFKB AND NOT Apoptosis \\
$f_{TRADD}$ = TNF AND NOT (IAP OR A20 OR Apoptosis) \\
$f_{FasL}$ = (STAT3 OR NFKB OR NFAT OR ERK) AND NOT Apoptosis \\
$f_{FasT}$ = NFKB AND NOT Apoptosis \\
$f_{Fas}$ = (FasT AND FasL) AND NOT (sFas OR Apoptosis) \\
$f_{sFas}$ = FasT AND S1P AND NOT Apoptosis \\
$f_{Ceramide}$ = Fas AND NOT (S1P OR Apoptosis) \\
$f_{DISC}$ = (FasT AND ((Fas AND IL2) OR Ceramide OR (Fas AND NOT FLIP))) AND NOT Apoptosis \\
$f_{Caspase}$ = ((((TRADD OR GZMB) AND BID) AND NOT IAP) OR DISC) AND NOT Apoptosis \\
$f_{FLIP}$ = (NFKB OR (CREB AND IFNG)) AND NOT (DISC OR Apoptosis) \\
$f_{A20}$ = NFKB AND NOT Apoptosis \\
$f_{BID}$ = (Caspase OR GZMB) AND NOT (BclxL OR MCL1 OR Apoptosis) \\
$f_{IAP}$ = NFKB AND NOT (BID OR Apoptosis) \\
$f_{BclxL}$ = (NFKB OR STAT3) AND NOT (BID OR GZMB OR DISC OR Apoptosis) \\
$f_{MCL1}$ = (IL2RB AND STAT3 AND NFKB AND PI3K) AND NOT (DISC OR Apoptosis) \\
$f_{Apoptosis}$ = Caspase OR Apoptosis \\
$f_{GPCR}$ = S1P AND NOT Apoptosis \\
$f_{SMAD}$ = GPCR AND NOT Apoptosis \\
$f_{SPHK1}$ = PDGFR AND NOT Apoptosis \\
$f_{S1P}$ = SPHK1 AND NOT (Ceramide OR Apoptosis) \\

\subsection*{Appendix D - The attractors of T-LGL leukemia survival network} \label{AppendixD}

In Table \ref{tab:Attractortable} we show the state of the nodes for all possible combinations of input signals in the presence of antigen (Stimuli=ON). We do not show the Apoptosis=ON attractor, in which all nodes except Apoptosis are inactive, since it is always a possibility. For simplicity, we only show which nodes oscillate and which of them stabilize in a steady state (i.e., the quasi-attractor)
and not the actual attractor, which would include all the network states that the nodes that oscillate can visit along with the transitions between these states. The signal combinations not shown in the table (i.e. CD45=OFF and IL15=OFF, with any other value for the other input signals) have apoptosis as their only attractor.

\begin{table*}
    \rowcolors{5}{white}{lightgray}
    \scalebox{0.7}{
    \begin{tabular}{ || >{\centering}p{2.5cm}<{\centering} || >{\centering}p{3.5cm}<{\centering} || >{\centering}p{3.5cm}<{\centering} || >{\centering}p{3.5cm}<{\centering} || p{3.5cm}<{\centering} ||}
    \hline
    \hline
    \ & \textbf{CD45=ON} & \textbf{CD45=ON} & \textbf{CD45=OFF} & \textbf{CD45=OFF}\\
    \ & \textbf{PDGF=ON/OFF} & \textbf{PDGF=ON/OFF} & \textbf{PDGF=ON/OFF} & \textbf{PDGF=ON/OFF}\\
    \textbf{Node} & \textbf{IL15=ON/OFF} & \textbf{IL15=ON/OFF} & \textbf{IL15=ON} & \textbf{IL15=ON}\\
    \ & \textbf{Stimuli2=OFF} & \textbf{Stimuli2=ON} & \textbf{Stimuli2=OFF} & \textbf{Stimuli2=ON}\\
    \ & \textbf{TAX=ON/OFF} & \textbf{TAX=ON/OFF} & \textbf{TAX=ON/OFF} & \textbf{TAX=ON/OFF}\\
    \hline \hline
    IL2RBT & OFF & OFF & ON & ON \\ \hline
    BclxL & ON & ON & OFF & OFF \\ \hline
    IFNGT & ON & ON & ON & ON \\ \hline
    PDGFR & ON & ON & ON & ON \\ \hline
    IFNG & OFF & OFF & OFF & OFF \\ \hline
    GAP & OFF & OFF & OFF & OFF \\ \hline
    Proliferation & OFF & OFF & OFF & OFF \\ \hline
    GZMB & OFF & OFF & ON & ON \\ \hline
    RAS & ON & ON & ON & ON \\ \hline
    TPL2 & ON & ON & ON & ON \\ \hline
    FasT & ON & ON & ON & ON \\ \hline
    FLIP & ON & ON & ON & ON \\ \hline
    LCK & ON & ON & ON & ON \\ \hline
    NFAT & ON & ON & ON & ON \\ \hline
    FasL & ON & ON & ON & ON \\ \hline
    Caspase & OFF & OFF & OFF & OFF \\ \hline
    NFKB & ON & ON & ON & ON \\ \hline
    IAP & ON & ON & ON & ON  \\ \hline
    BID & OFF & OFF & OFF & OFF \\ \hline
    Cyto. Signal. & Oscillates & Oscillates & ON & ON\\ \hline
    TNF & ON & ON & ON & ON \\ \hline
    MCL1 & OFF & OFF & ON & ON\\ \hline
    Ceramide & OFF & OFF & OFF & OFF \\ \hline
    GRB2 & Oscillates & Oscillates & ON & ON \\ \hline
    PI3K & ON & ON & ON & ON \\ \hline
    SMAD & ON & ON & ON & ON \\ \hline
    P27 & OFF & OFF & ON & ON \\ \hline
    ZAP70 & Oscillates & Oscillates & OFF & OFF \\ \hline
    CREB & OFF & OFF & OFF & OFF \\ \hline
    DISC & OFF & OFF & OFF & OFF \\ \hline
    IL2RB & OFF & OFF & ON & ON\\ \hline
    Fas & OFF & OFF & OFF & OFF \\ \hline
    IL2RA & Oscillates & Oscillates & OFF & OFF \\ \hline
    S1P & ON & ON & ON & ON \\ \hline
    ERK & ON & ON & ON & ON \\ \hline
    SPHK1 & ON & ON & ON & ON \\ \hline
    A20 & ON & ON & ON & ON \\ \hline
    MEK & ON & ON & ON & ON \\ \hline
    CTLA4 & Oscillates & Oscillates & Oscillates & Oscillates\\ \hline
    TBET & OFF & OFF & ON & ON \\ \hline
    RANTES & ON & ON & ON & ON \\ \hline
    SOCS & OFF & OFF & OFF & OFF \\ \hline
    sFas & ON & ON & ON & ON \\ \hline
    IL2RAT & ON & ON & OFF & OFF\\ \hline
    TCR & Oscillates & Oscillates & Oscillates & Oscillates\\ \hline
    STAT3 & OFF & OFF & ON & ON \\ \hline
    GPCR & ON & ON & ON & ON \\ \hline
    P2 & ON, OFF & OFF & ON, OFF & OFF \\ \hline
    TRADD & OFF & OFF & OFF & OFF \\ \hline
    PLCG1 & ON & ON & ON & ON \\ \hline
    FYN & Oscillates & Oscillates & ON & ON \\ \hline
    IL2 & ON & ON & OFF & OFF \\ \hline
    JAK & OFF & OFF & ON & ON \\ \hline
    Apoptosis & OFF & OFF & OFF & OFF \\ \hline
    \end{tabular}
    }
\caption{The attractors of T-LGL leukemia survival network. This table shows the state of the nodes for all possible combinations of input signals in the presence of antigen (Stimuli=ON).}
\label{tab:Attractortable}
\end{table*}

\end{document}